\newcommand{\Subject}[1]{\paragraph*{#1.}}
\newcommand{\St}{~|~}
\newcommand{\con}{\cdot}
\newcommand{\tuple}[1]{\langle #1  \rangle}
\newcommand{\ST}{~{\big |}\:}
\newcommand{\Nat}{\ensuremath{\mathbb{N}}\xspace}
\newcommand{\Rat}{\ensuremath{\mathbb{Q}}\xspace}
\newcommand{\A}{{\mathcal{A}}}
\newcommand{\B}{{\mathcal{B}}}
\renewcommand{\C}{{\mathcal{C}}}
\newcommand{\C}{{\mathcal{C}}}
\newcommand{\D}{{\mathcal{D}}}
\newcommand{\M}{{\mathcal{M}}}
\newcommand{\N}{{\mathcal{N}}}
\newcommand{\emptyword}{\varepsilon}
\renewcommand{\phi}{\varphi}
\newcommand{\Omit}[1]{}
\newcommand{\inc}{\mbox{\sc inc}\xspace}
\newcommand{\dec}{\mbox{\sc dec}\xspace}
\newcommand{\goto}{\mbox{\sc goto }}
\newcommand{\halt}{\mbox{\sc halt}\xspace}
\newcommand{\jz}[3]{\mbox{\sc if $#1$=0 goto $#2$ else goto $#3$}\xspace}
\newcommand{\CMrun}{\psi}
\newcommand{\numof}[2]{\#(#1,#2)}
\newcommand{\qfr}{q_{\mathsf{freeze}}}
\newcommand{\qhalt}{q_{\mathsf{halt}}}
\newcommand{\qhc}{q_{\mathsf{HC}}}
\newcommand{\qzc}{q_{\mathsf{ZC}}^c}
\newcommand{\qpc}[1]{q_\mathsf{PC#1}^c}
\newcommand{\incdec}{\Sigma^{\inc\dec}}
\newcommand{\allgoto}{\Sigma^{\mbox{\sc goto}}}
\newcommand{\withouthalt}{\Sigma^{\mbox{\sc nohalt}}}
\newcommand{\minval}{\mbox{\sc minval}}
\newcommand{\maxval}{\mbox{\sc maxval}}
\newcommand{\lowest}{\mbox{\sc lowrun}}
\newcommand{\highest}{\mbox{\sc highrun}}
\newcommand{\lowestw}{\mbox{\sc lowword}}
\newcommand{\highestw}{\mbox{\sc highword}}
\newcommand{\maxdiff}[1]{\mbox{\sc maxdiff}(\mathsf{#1})}
\newcommand{\fin}{\tau}
\newcommand{\pref}{\mbox{\sc pref}\xspace}
\newcommand{\preferred}{\delta_{pr}}
\newcommand{\comment}[1]{}
\newcommand{\checker}[2]{\vspace{5pt}\noindent{\bf #1 Checker#2.}}
\newcommand{\lchecker}[2]{\vspace{5pt}\noindent{\it #1 Checker#2.}}
\begin{document}
\title{On the Comparison of Discounted-Sum Automata with Multiple Discount Factors
\iftoggle{completeProofs}{\thanks{This is the full version of a chapter with the same title that appears in the FoSSaCS	2023 conference proceedings \cite{FoSSaCS2023}.}}{}
}
\titlerunning{Comparison of Discounted-Sum Automata with Multiple Discount Factors}
%
\author{Udi Boker\inst{}\thanks{Research supported by the Israel Science Foundation grant 2410/22.}\orcidID{0000-0003-4322-8892} 
	\and
Guy Hefetz\inst{}\orcidID{0000-0002-4451-6581}}
\authorrunning{U. Boker and G. Hefetz}
%
\institute{Reichman University, Herzliya, Israel\\ \email{udiboker@runi.ac.il}, \email{ghefetz@gmail.com}}
\maketitle              
\begin{abstract}
We look into the problems of comparing nondeterministic discounted-sum automata on finite and infinite words. That is, the problems of checking for automata $\A$ and $\B$ whether or not it holds that for all words $w$, $\A(w)=\B(w), \A(w)\leq\B(w)$, or $\A(w)<\B(w)$.

These problems are known to be decidable when both automata have the same single integral discount factor, while decidability is open in all other settings: when the single discount factor is a non-integral rational; when each automaton can have multiple discount factors; and even when each has a single integral discount factor, but the two are different.

We show that it is undecidable to compare discounted-sum automata with multiple discount factors, even if all are integrals, while it is decidable to compare them if each has a single, possibly different, integral discount factor.
To this end, we also provide algorithms to check for given nondeterministic automaton $\N$ and deterministic automaton $\D$, each with a single, possibly different, rational discount factor, whether or not $\N(w) = \D(w)$, $\N(w) \geq \D(w)$, or $\N(w) > \D(w)$ for all words $w$.

\keywords{Discounted-sum Automata \and Comparison \and Containmet.}
\end{abstract}
%
\section{Introduction}
Equivalence and containment checks of Boolean automata, namely the checks of whether $L(\A)=L(\B)$, $L(\A)\subseteq L(\B)$, or $L(\A)\subset L(\B)$, where $L(\A)$ and $L(\B)$ are the languages that $\A$ and $\B$ recognize, are central in the usage of automata theory in diverse areas, and in particular in formal verification (e.g,\ \cite{Var87,HTKB92,CDK93,THB95,Var96,KVW00}). 
Likewise, comparison of quantitative automata, which extends the equivalence and containment checks by asking whether $\A(w)=\B(w)$, whether $\A(w)\leq\B(w)$, or whether $\A(w)<\B(w)$ for all words $w$, are essential for harnessing quantitative-automata theory to the service of diverse fields and in particular to the service of quantitative formal verification (e.g,\ \cite{CDH10,ExpressivenessQuantitativeLanguages,QuantitativeLanguagesDefinedByFunctionalAutomata,bren16,HPR17,SKRV17,ComparatorAutomataInQuantitativeVerification,FLW20}). 

Discounted summation is a common valuation function in quantitative automata theory (e.g,\ \cite{Skew,AlternatingWeightedAutomata,ExpressivenessQuantitativeLanguages,CDH10}), as well as in various other computational models, such as games (e.g.,\  \cite{ZP96,Andersson06,DiscountingInSystems}), Markov decision processes (e.g,\ \cite{DiscountedMarkov,DiscountedDeterministicMarkov,MultiObjectiveDiscountedReward}), and reinforcement learning (e.g,\ \cite{IntroductionToReinforcementLearning,CoRR19}), as it formalizes the concept that an immediate reward is better than a potential one in the far future, as well as that a potential problem (such as a bug in a reactive system) in the far future is less troubling than a current one.

A nondeterministic discounted-sum automaton (NDA) has rational weights on the transitions, and a fixed rational discount factor $\lambda > 1$. 
The value of a (finite or infinite) run is the discounted summation of the weights on the transitions, such that the weight in the $i$th transition of the run is divided by $\lambda^i$. 
The value of a (finite or infinite) word is the infimum value of the automaton runs on it.
An NDA thus realizes a function from words to real numbers.

NDAs cannot always be determinized \cite{CDH10}, they are not closed under basic algebraic operations \cite{BH14}, and their comparison is not known to be decidable, relating to various longstanding open problems \cite{TDS}.
However, restricting NDAs to have an integral discount factor $\lambda\in\Nat\setminus\{0,1\}$ provides a robust class of automata that is closed under determinization and under algebraic operations, and for which comparison is decidable \cite{BH14}.

Various variants of NDAs are studied in the literature, among which are \emph{functional}, \emph{k-valued}, \emph{probabilistic}, and more \cite{QuantitativeLanguagesDefinedByFunctionalAutomata,FiniteValuedWeightedAutomata,ProbabilisticWeightedAutomata}.
Yet, until recently, all of these models were restricted to have a single discount factor.
This is a significant restriction of the general discounted-summation paradigm, in which multiple discount factors are considered. For example, Markov decision processes and discounted-sum games allow multiple discount factors within the same entity \cite{DiscountedMarkov,Andersson06}.
In \cite{BH21}, NDAs were extended to NMDAs, allowing for multiple discount factors, where each transition can have a different one. Special attention was given to integral NMDAs, namely to those with only integral discount factors, analyzing whether they preserve the good properties of integral NDAs. It was shown that they are generally not closed under determinization and under algebraic operations, while a restricted class of them, named tidy-NMDAs, in which the choice of discount factors depends on the prefix of the word read so far, does preserve the good properties of integral NDAs.

While comparison of tidy-NMDAs with the same choice function is decidable in PSPACE \cite{BH21}, it was left open whether comparison of general integral NMDAs $\A$ and $\B$ is decidable. It is even open whether comparison of two integral NDAs with different (single) discount factors is decidable.

We show that it is undecidable to resolve for given NMDA $\N$ and deterministic NMDA (DMDA) $\D$, even if both have only integral discount factors, on both finite and infinite words, whether $\N \equiv \D$ and whether $\N \leq \D$, and on finite words also whether $\N < \D$.
We prove the undecidability result by reduction from the halting problem of two-counter machines.
The general scheme follows similar reductions, such as in \cite{DDGRT10,ABK22}, yet the crux is in simulating a counter by integral NMDAs. 
Upfront, discounted summation is not suitable for simulating counters, since a current increment has, in the discounted setting,  a much higher influence than of a far-away decrement. However, we show that multiple discount factors allow in a sense to eliminate the influence of time, having automata in which no matter where a letter appears in the word, it will have the same influence on the automaton value. (See \cref{lem:undecidabilityContainment,fig:undecidabilityContainment_A}). Another main part of the proof is in showing how to nondeterministically adjust the automaton weights and discount factors in order to ``detect'' whether a counter is at a current value $0$. (See \cref{fig:NegativeCounter,fig:balancedCounters,fig:zeroJumpChecker,fig:positiveJumpChecker}.)

On the positive side, we provide algorithms to decide for given NDA $\N$ and deterministic NDA (DDA) $\D$, with arbitrary, possibly different, rational discount factors, whether $\N \equiv \D$, $\N \geq \D$, or $\N > \D$ (\cref{cl:TwoNdasDecision}).
Our algorithms work on both finite and infinite words, and run in PSPACE when the automata weights are represented in binary and their discount factors in unary.
Since integral NDAs can always be determinized \cite{BH14}, our method also provides an algorithm to compare two integral NDAs, though not necessarily in PSPACE, since determinization might exponentially increase the number of states. (Even though determinization of NDAs is in PSPACE \cite{BH14,BH21}, the exponential number of states might require an exponential space in our algorithms of comparing NDAs with different discount factors.)

The challenge with comparing automata with different discount factors comes from the combination of their different accumulations, which tends to be intractable, resulting in the undecidability of comparing integral NMDAs, and in the open problems of comparing rational NDAs and of analyzing the representation of numbers in a non-integral basis \cite{Mah68,GS01,Har06,TDS}.
Yet, the main observation underlying our algorithm is that when each automaton has a single discount factor, we may unfold the combination of their computation trees only up to some level $k$, after which we can analyze their continuation separately, first handling the automaton with the lower (slower decreasing) discount factor and then the other one.
The idea is that after level $k$, since the accumulated discounting of the second automaton is already much more significant, even a single non-optimal transition of the first automaton cannot be compensated by a continuation that is better with respect to the second automaton. We thus compute the optimal suffix words and runs of the first automaton from level $k$, on top which we compute the optimal runs of the second automaton.

\section{Preliminaries}
\Subject{Words}
An \emph{alphabet} $\Sigma$ is an arbitrary finite set, and a \emph{word} over $\Sigma$ is a finite or infinite sequence of letters in $\Sigma$, with $\emptyword$ for the empty word. We denote the concatenation of a finite word $u$ and a finite or infinite word $w$ by $u\con w$, or simply by $uw$.
We define $\Sigma^+$ to be the set of all finite words except the empty word, i.e., $\Sigma^+=\Sigma^*\setminus\{\emptyword\}$.
For a word $w=\sigma_0 \sigma_1 \sigma_2 \cdots$ and indexes $i\leq j$, we denote the \emph{letter at index $i$} as $w[i]=\sigma_i$, and the \emph{sub-word from $i$ to $j$} as $w[i..j]=\sigma_i \sigma_{i+1} \cdots \sigma_j$.

For a finite word $w$ and letter $\sigma\in\Sigma$, we denote the number of occurrences of $\sigma$ in $w$ by $\numof{\sigma}{w}$, and for a set $S\subseteq\Sigma$, we denote $\sum_{\sigma\in S}\numof{\sigma}{w}$ by $\numof{S}{w}$.

For a finite or infinite word $w$ and a letter $\sigma\in\Sigma$, we define the \emph{prefix of $w$ up to $\sigma$}, $\pref_\sigma(w)$, as the minimal prefix of $w$ that contains a $\sigma$ letter if there is a $\sigma$ letter in $w$ or $w$ itself if it does not contain any $\sigma$ letters. Formally,\
$\pref_\sigma(w) =\begin{cases}
	w\big[0..\min \{i \St w[i] = \sigma\}\big] & \exists i \St w[i]=\sigma\\
	w & \text{otherwise}
\end{cases} $

\Subject{Automata}
A nondeterministic discounted-sum automaton (NDA) \cite{CDH10} is an automaton with rational weights on the transitions, and a fixed rational discount factor $\lambda > 1$. 
A nondeterministic discounted-sum automaton with multiple discount factors (NMDA) \cite{BH21} is similar to an NDA, but with possibly a different discount factor on each of its transitions. They are formally defined as follows:

\begin{definition} [{\cite{BH21}}]\label{def:NMDA}
	A nondeterministic discounted-sum automaton with multiple discount factors (NMDA), on finite or infinite words, is a tuple $\A = \tuple{\Sigma, Q, \iota, \delta, \gamma, \rho}$ over an alphabet $\Sigma$, with a finite set of states $Q$, an initial set of states $\iota\subseteq Q$, a transition function $\delta \subseteq Q \times \Sigma \times Q$, a weight function $\gamma: \delta\to\Rat$, 
	and a discount-factor function $\rho: \delta \to \Rat\cap (1,\infty)$, assigning to each transition its discount factor, which is a rational greater than one.
	\footnote{Discount factors are sometimes defined as numbers between $0$ and $1$, under which setting weights are multiplied by these factors rather than divided by them.}
	\begin{itemize}
		\item 
		A \emph{run} of $\A$ is a sequence of states and alphabet letters, $p_0, \sigma_0, p_1, \sigma_1, p_2, \cdots$, such that $p_0\in\iota$ is an initial state, and for every $i$, 
		$(p_i,\sigma_i,p_{i+1})\in\delta$.
		
		\item 
		The \emph{length} of a run $r$, denoted by $|r|$, is $n$ for a finite run $r = p_0, \sigma_0, p_1, \allowbreak\cdots,
		\sigma_{n-1}, p_n$, and $\infty$ for an infinite run.
		
		\item 
		For an index $i<|r|$, we define the $i$-th transition of $r$ as $r[i]=(p_{i},\sigma_i,p_{i+1})$, and the prefix run with $i$ transitions as $r[0..i]=p_0, \sigma_0, p_1, \cdots, \sigma_i, p_{i+1}$.
		
		\item The \emph{value} of a finite/infinite run $r$ is
		$\A(r)=\sum_{i=0}^{|r|-1}{ \bigg(\gamma\big(r[i])\big) \cdot \prod_{j=0}^{i-1} \frac{1}{\rho\big(r[j]\big)}\bigg)}$.
		For example, the value of the run $r_1=q_0,a,q_0,a,q_1,b,q_2$ of $\A$ from \cref{fig:NMDAExample} is 
		$\A(r_1)= 1 + \frac{1}{2}\cdot\frac{1}{3} + 2\cdot\frac{1}{2\cdot 3}=\frac{3}{2}$.
		
		\item The \emph{value} of $\A$ on a finite or infinite word $w$ is\\
		$\A(w) = \inf \{\A(r) \St r \text{ is a run of } \A \text{ on } w \}$.
		
		\item
		For every finite run $r = p_0, \sigma_0, p_1, \cdots, \sigma_{n-1}, p_n$, we define
		the \emph{target state} as $\delta(r)=p_n$ and the \emph{accumulated discount factor} as $\rho(r)=\prod_{i=0}^{n-1}{\rho\big(r[i])\big)}$.
		
		\item
		When all discount factors are integers, we say that $\A$ is an \emph{integral} NMDA.
		
		\item In the case where $|\iota|= 1$ and for every $q \in Q$ and $\sigma \in \Sigma$, we have $|\{ q' \ST (q,\sigma,q')\in\delta \}| \leq 1$, we say that $\A$ is {\em deterministic}, denoted by DMDA, and view $\delta$ as a function from words to states.
		
		\item
		When the discount factor function $\rho$ is constant, $\rho\equiv\lambda\in\Rat\cap (1,\infty)$, we say that $\A$ is a \emph{nondeterministic discounted-sum automaton }(NDA) \cite{CDH10} with discount factor $\lambda$ (a  $\lambda$-NDA). If $\A$ is deterministic, it is a $\lambda$-DDA.
		
		\item
		For a state $q\in Q$, we write $\A^q$ for the NMDA $\A^q = \tuple{\Sigma, Q, \set{q}, \delta, \gamma, \rho}$.
		
	\end{itemize}
\end{definition}


\begin{figure}
	\vspace*{-\baselineskip}
	\centering
	\setlength{\belowcaptionskip}{-\baselineskip}
	\begin{tikzpicture}[->,>=stealth',shorten >=1pt,auto,node distance=2cm, semithick, initial text=, every initial by arrow/.style={|->}]
		\node [midway] [ left of=q0] {$\A:$};
		\node[initial, state] (q0) {$q_0$};
		\node[state] (q1) [right of=q0] {$q_1$};
		\node[state] (q2) [right of=q1] {$q_2$};
		
		\path 
		(q0) edge	[loop above, out=120, in=70,looseness=4] node [left, xshift=-0.2cm, yshift=-0.2cm] {$a,1,3$} (q0)
		(q1) edge	[loop above, out=120,in=70,looseness=4] node [right, xshift=0.2cm, yshift=-0.1cm] {$a,\frac{1}{2},2$} (q1)
		(q2) edge	[loop above, out=80, in=30, looseness=4] node [right, xshift=0.2cm, yshift=-0.2cm]{$a,\frac{1}{4},2$} (q2)
		(q2) edge	[loop below, out=-80, in=-30,looseness=4] node [right, xshift=0.2cm, yshift=0.2cm]{$b,\frac{1}{4},2$} (q2)
		
		(q1) edge	[below, out=-160,in=-20] node {$a,1,3$} (q0)
		
		(q0) edge [above, out=20,in=160] node[yshift=-0.07cm] {$a,\frac{1}{2},2$} (q1)
		(q1) edge [above] node [yshift=-0.1cm] {$b,2,5$} (q2)
		(q0) edge	[below, out=-50,in=-150] node [xshift=1.3cm, yshift=0.4cm]{$b,\frac{3}{2},4$} (q2)
		;
	\end{tikzpicture}
	\caption[An NMDA example.]{\label{fig:NMDAExample}An NMDA $\A$. The labeling on the transitions indicate the alphabet letter, the weight of the transition, and its discount factor.}
\end{figure}
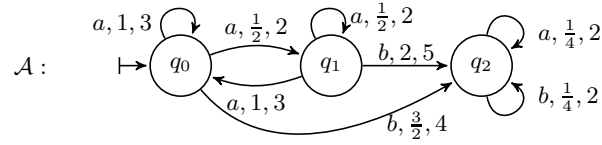

\Subject{Counter machines}
A two-counter machine \cite{Min67} $\M$ is a sequence $(l_1,\ldots,l_n)$ of commands, for some $n\in\Nat$, involving two counters $x$ and $y$. We refer to
$\set{1,\ldots,n}$ as the {\em locations} of the machine. For every $i\in\set{1,\ldots,n}$ we refer to $l_i$ as the {\em command in location $i$}. There are five possible forms of commands:
$$\inc(c),\ \dec(c),\ \goto l_k,\  \jz{c}{l_k}{l_{k'}},\  \halt,$$
where $c\in \set{x,y}$ is a counter and $1\le k,k'\le n$ are locations. 
For not decreasing a zero-valued counter $c\in\set{x,y}$, every $\dec(c)$ command is preceded by the command  $\jz{c}{\text{<current\_line>}}{\text{<next\_line>}}$, and there are no other direct goto-commands to it. 
The counters are initially set to $0$.
An example of a two-counter machine is given in \cref{fig:machineExample}.
\begin{figure}
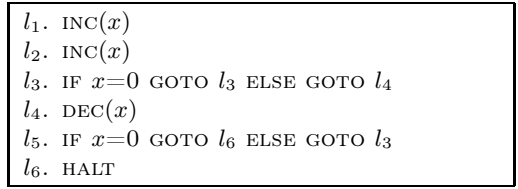

	\vspace*{-\baselineskip}
	\centering
	\setlength{\belowcaptionskip}{-\baselineskip}
	\fbox{\begin{minipage}{20em}
			\begin{enumerate}[itemsep=0pt]
				\item[$l_1$.] $\inc(x)$
				\item[$l_2$.] $\inc(x)$
				\item[$l_3$.] $\jz{x}{l_3}{l_4}$
				\item[$l_4$.] $\dec(x)$
				\item[$l_5$.] $\jz{x}{l_6}{l_3}$
				\item[$l_6$.] $\halt$
			\end{enumerate}
	\end{minipage}}
	\caption{\label{fig:machineExample}An example of a two-counter machine.}
\end{figure}

Let $L$ be the set of possible commands in $\M$, then a {\em run} of $\M$ is a sequence
$\CMrun=\CMrun_1,\ldots,\CMrun_m\in (L\times\Nat\times\Nat)^*$ such that the following hold:
\begin{enumerate}
	\item $\CMrun_1=\tuple{l_1,0,0}$.
	\item For all $1< i\le m$, let $\CMrun_{i-1}=(l_j,\alpha_x,\alpha_y)$ and $\CMrun_{i}=(l',\alpha_x',\alpha_y')$. Then, the following hold.
	\begin{itemize}
		\item If $l_j$ is an $\inc(x)$ command (resp. $\inc(y)$), then $\alpha_x'=\alpha_x+1$, $\alpha_y'=\alpha_y$ (resp. $\alpha_y=\alpha_y+1$, $\alpha_x'=\alpha_x$), and $l'=l_{j+1}$.
		\item If $l_j$ is $\dec(x)$ (resp. $\dec(y)$) then $\alpha_x'=\alpha_x-1$, $\alpha_y'=\alpha_y$ (resp. $\alpha_y=\alpha_y-1$, $\alpha_x'=\alpha_x$), and $l'=l_{j+1}$.
		\item If $l_j$ is $\goto l_k$ then $\alpha_x'=\alpha_x$, $\alpha_y'=\alpha_y$, and $l'=l_k$.
		\item If $l_j$ is $\jz{x}{l_k}{l_{k'}}$ then $\alpha_x'=\alpha_x$, $\alpha_y'=\alpha_y$, and $l'=l_k$ if $\alpha_x=0$, and $l'=l_{k'}$ otherwise.
		\item If $l_j$ is $\jz{y}{l_k}{l_{k'}}$ then $\alpha_x'=\alpha_x$, $\alpha_y'=\alpha_y$, and $l'=l_k$ if $\alpha_y=0$, and $l'=l_{k'}$ otherwise.		
		\item If $l'$ is $\halt$ then $i=m$, namely a run does not continue after $\halt$.
	\end{itemize}
\end{enumerate}
If, in addition, we have that $\CMrun_m=\tuple{l_j,\alpha_x,\alpha_y}$ such that $l_j$ is a $\halt$ command, we say that $\CMrun$ is a \emph{halting run}. We say that a machine $\M$ 0-halts if its run is halting and ends in $\tuple{l,0,0}$.
We say that a sequence of commands $\tau\in L^*$ {\em fits} a run $\CMrun$, if $\tau$ is the projection of $\CMrun$ on its first component.

The {\em command trace} $\pi=\sigma_1,\ldots,\sigma_{m}$ of a halting run $\CMrun=\CMrun_1,\ldots,\CMrun_m$ describes the flow of the run, including a description of whether a counter $c$ was equal to $0$ or larger than $0$ in each occurrence of an $\jz{c}{l_k}{l_{k'}}$ command. It is formally defined as follows.
$\sigma_{m}=\halt$ and for every $1< i\le m$, we define $\sigma_{i-1}$ according to $\CMrun_{i-1}=(l_j,\alpha_x,\alpha_y)$ in the following manner:
\begin{itemize}
	\item $\sigma_{i-1}=l_j$ if $l_j$ is not of the form $\jz{c}{l_k}{l_{k'}}$.
	\item 
	$\sigma_{i-1}=(\goto l_k,c=0)$ for $c\in\{x,y\}$, if $\alpha_c=0$ and the command $l_j$ is of the form $\jz{c}{l_k}{l_{k'}}$.
	\item 
	$\sigma_{i-1}=(\goto l_{k'},c>0)$ for $c\in\{x,y\}$, if $\alpha_c>0$ and the command $l_j$ is of the form $\jz{c}{l_k}{l_{k'}}$.
\end{itemize}
For example, the command trace of the halting run of the machine in \cref{fig:machineExample} is $\inc(x)$, $\inc(x)$, $(\goto l_4, x>0)$, $\dec(x)$, $(\goto l_3, x>0)$, $(\goto l_4, x>0)$, $\dec(x)$, $(\goto l_6, x=0)$, $\halt $.

Deciding whether a given counter machine $\M$ halts is known to be undecidable \cite{Min67}. Deciding whether $\M$ halts with both counters having value $0$, termed the {\em $0$-halting problem}, is also undecidable. 
Indeed, the halting problem can be reduced to the latter by adding some commands that clear the counters, before every \halt command. 

\section{Comparison of NMDAs}

We show that comparison of (integral) NMDAs is undecidable by reduction from the halting problem of two-counter machines. Notice that our NMDAs only use integral discount factors, while they do have non-integral weights. Yet,  weights can be easily changed to integers as well, by multiplying them all by a common denominator and making the corresponding adjustments in the calculations.

We start with a lemma on the accumulated value of certain series of discount factors and weights. Observe that by the lemma, no matter where the pair of discount-factor $\lambda\in\Nat\setminus\{0,1\}$ and weight $w=\frac{\lambda-1}{\lambda}$ appear along the run, they will have the same effect on the accumulated value. This property will play a key role in simulating counting by NMDAs.
\begin{lemma}\label{lem:undecidabilityContainment}
	For every sequence $\lambda_1,\cdots,\lambda_{m}$ of integers larger than $1$ and weights $w_1,\cdots,w_{m}$ such that $w_i=\frac{\lambda_i-1}{\lambda_i}$, we have
	$\sum_{i=1}^{m}{ \big(w_i \cdot \prod_{j=1}^{i-1} \frac{1}{\lambda_j}\big)}=
	1-\frac{1}{\prod_{j=1}^{m}\lambda_j}
	$.
\end{lemma}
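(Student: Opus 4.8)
The plan is to prove the identity by a straightforward induction on $m$, exploiting the telescoping structure that the specific choice $w_i = \frac{\lambda_i-1}{\lambda_i}$ is designed to produce. For the base case $m=1$, the left-hand side is just $w_1 \cdot \prod_{j=1}^{0}\frac{1}{\lambda_j} = w_1 = \frac{\lambda_1-1}{\lambda_1} = 1 - \frac{1}{\lambda_1}$, which matches the right-hand side. For the inductive step, assume the claim holds for sequences of length $m-1$. Split off the last summand:
\[
\sum_{i=1}^{m}\Bigl(w_i \cdot \prod_{j=1}^{i-1}\tfrac{1}{\lambda_j}\Bigr)
= \sum_{i=1}^{m-1}\Bigl(w_i \cdot \prod_{j=1}^{i-1}\tfrac{1}{\lambda_j}\Bigr)
+ w_m \cdot \prod_{j=1}^{m-1}\tfrac{1}{\lambda_j}.
\]
By the induction hypothesis the first sum equals $1 - \frac{1}{\prod_{j=1}^{m-1}\lambda_j}$, so the whole expression becomes $1 - \frac{1}{\prod_{j=1}^{m-1}\lambda_j} + \frac{\lambda_m-1}{\lambda_m}\cdot\frac{1}{\prod_{j=1}^{m-1}\lambda_j}$. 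Factoring out $\frac{1}{\prod_{j=1}^{m-1}\lambda_j}$ from the last two terms gives $1 - \frac{1}{\prod_{j=1}^{m-1}\lambda_j}\bigl(1 - \frac{\lambda_m-1}{\lambda_m}\bigr) = 1 - \frac{1}{\prod_{j=1}^{m-1}\lambda_j}\cdot\frac{1}{\lambda_m} = 1 - \frac{1}{\prod_{j=1}^{m}\lambda_j}$, as required.

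An alternative (and perhaps cleaner) route is to avoid induction entirely by rewriting each summand directly: since $w_i = 1 - \frac{1}{\lambda_i}$, we have $w_i \cdot \prod_{j=1}^{i-1}\frac{1}{\lambda_j} = \prod_{j=1}^{i-1}\frac{1}{\lambda_j} - \prod_{j=1}^{i}\frac{1}{\lambda_j}$. Setting $P_i = \prod_{j=1}^{i}\frac{1}{\lambda_j}$ (with $P_0 = 1$), the sum telescopes: $\sum_{i=1}^{m}(P_{i-1} - P_i) = P_0 - P_m = 1 - \frac{1}{\prod_{j=1}^{m}\lambda_j}$. I would likely present this telescoping argument as the main proof, since it makes transparent why the particular weight choice works and requires no case analysis.

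There is essentially no obstacle here; the only thing to be careful about is the empty-product conventions at the endpoints ($\prod_{j=1}^{0}\frac{1}{\lambda_j} = 1$) and making sure the index bookkeeping in the telescoping sum is stated correctly. The content of the lemma is entirely in the algebraic identity; its significance, as the surrounding text emphasizes, is the consequence that a transition carrying discount factor $\lambda$ and weight $\frac{\lambda-1}{\lambda}$ contributes to the run value in a way that is independent of its position, which is what later enables the simulation of counters by NMDAs.
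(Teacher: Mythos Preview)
Your proposal is correct. The inductive argument you give first is exactly the paper's proof: induction on $m$, trivial base case, and in the step split off the last summand, apply the hypothesis, and simplify $-\frac{1}{\prod_{j=1}^{m-1}\lambda_j}+\frac{\lambda_m-1}{\lambda_m}\cdot\frac{1}{\prod_{j=1}^{m-1}\lambda_j}$ to $-\frac{1}{\prod_{j=1}^{m}\lambda_j}$.

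Your telescoping alternative, writing each summand as $P_{i-1}-P_i$ with $P_i=\prod_{j=1}^{i}\frac{1}{\lambda_j}$, is not in the paper but is a legitimate and arguably more transparent route: it dispenses with the induction scaffolding and shows directly why the choice $w_i=1-\frac{1}{\lambda_i}$ forces the position-independence that the surrounding construction relies on. The two arguments are of course the same computation in different clothing; the telescoping version just names the intermediate quantity $P_i$ explicitly.
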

\iftoggle{completeProofs}{
\begin{proof}
	We show the claim by induction on $m$.
	
	\noindent
	The base case, i.e.\ $m=1$, is trivial.
	For the induction step we have
	\begin{align*}
		\sum_{i=1}^{m+1}{ \big(w_i \cdot \prod_{j=1}^{i-1} \frac{1}{\lambda_j}\big)}&=
		\sum_{i=1}^{m}{ \big(w_i \cdot \prod_{j=1}^{i-1} \frac{1}{\lambda_j}\big)} + w_{m+1} \cdot \prod_{j=1}^{m} \frac{1}{\lambda_j}\\
		&= 1-\frac{1}{\prod_{j=1}^{m}\lambda_j} + \frac{\lambda_{m+1}-1}{\lambda_{m+1}} \cdot \prod_{j=1}^{m} \frac{1}{\lambda_j}\\
		&= 1-\frac{\lambda_{m+1}}{\prod_{j=1}^{m+1}\lambda_j} + \frac{\lambda_{m+1}-1}{\prod_{j=1}^{m+1}\lambda_j}
		= 1-\frac{1}{\prod_{j=1}^{m+1}\lambda_j}
	\end{align*}
\end{proof}
}
{The proof is by induction on $m$ and appears in \cite{BH23}.}

\subsection{The Reduction}\label{sec:TheReduction}
We turn to our reduction from the halting problem of two-counter machines to the problem of NMDA containment. We provide the construction and the correctness lemma with respect to automata on finite words, and then show in \cref{sec:Undecidability} how to use the same construction also for automata on infinite words.

Given a two-counter machine $\M$ with the commands $(l_1,\ldots,l_n)$,
we construct an integral DMDA $\A$ and an integral NMDA $\B$ on finite words, such that $\M$ $0$-halts iff there exists a word $w\in\Sigma^+$ such that $\B(w)\geq \A(w)$ iff there exists a word $w\in\Sigma^+$ such that $\B(w) > \A(w)$.

The automata $\A$ and $\B$ operate over the following alphabet $\Sigma$, which consists of   $5n+5$ letters, standing for the possible elements in a command trace of $\M$:
\begin{align*}
	\incdec = \ &\set{\inc(x),\dec(x),\inc(y),\dec(y)} \\ 
	\allgoto =\ &\big\{\goto\ l_k: k\in \{1,\ldots,n\}\big\}\cup\\
	&\big\{(\goto\ l_k,c=0): k\in \{1,\ldots,n\},c\in\{x,y\}\big\}\cup\\
	&\big\{(\goto\ l_{k'},c>0): k'\in \{1,\ldots,n\},c\in\{x,y\}\big\} \\
	\withouthalt =\ &\incdec \cup \allgoto \\
	\Sigma =\ &\withouthalt \cup \big\{\halt\big\}
\end{align*}

When $\A$ and $\B$ read a word $w\in\Sigma^+$, they intuitively simulate a sequence of commands $\tau_u$ that induces the command trace $u=\pref_{\halt}(w)$. 
If $\tau_u$ fits the actual run of $\M$, and this run 0-halts, then the minimal run of $\B$ on $w$ has a value strictly larger than $\A(w)$. 
If, however, $\tau_u$ does not fit the actual run of $\M$, or it does fit the actual run but it does not 0-halt, then the violation is detected by $\B$, which has a run on $w$ with value strictly smaller than $\A(w)$.

In the construction, we use the following partial discount-factor functions $\rho_p,\rho_d:\withouthalt\to \Nat$ and partial weight functions $\gamma_p,\gamma_d:\withouthalt\to \Rat$. 
$$
\rho_p(\sigma)=\begin{cases}
	5 & \sigma=\inc(x)\\
	4 & \sigma=\dec(x)\\
	7 & \sigma=\inc(y)\\
	6 & \sigma=\dec(y)\\
	15 & \text{otherwise}
\end{cases} ~~~~~
\rho_d(\sigma)=\begin{cases}
	4 & \sigma=\inc(x)\\
	5 & \sigma=\dec(x)\\
	6 & \sigma=\inc(y)\\
	7 & \sigma=\dec(y)\\
	15 & \text{otherwise}
\end{cases}
$$
$\gamma_p(\sigma)=\frac{\rho_p(\sigma)-1}{\rho_p(\sigma)}$, and $\gamma_d(\sigma)=\frac{\rho_d(\sigma)-1}{\rho_d(\sigma)}$.
We say that $\rho_p$ and $\gamma_p$ are the \emph{primal} discount-factor and weight functions, while $\rho_d$ and $\gamma_d$ are the \emph{dual} functions.
Observe that for every $c\in\{x,y\}$ we have that 
\begin{align}
	\rho_p(\inc(c))=\rho_d(\dec(c))>\rho_p(\dec(c))=\rho_d(\inc(c)) \label{eqn:primalDual}
\end{align}

Intuitively, we will use the primal functions for $\A$'s discount factors and weights, and the dual functions for identifying violations.
Notice that if changing the primal functions to the dual ones in more occurrences of $\inc(c)$ letters than of $\dec(c)$ letters  along some run, then by \cref{lem:undecidabilityContainment} the run will get a value lower than the original one.

We continue with their formal definitions.
$\A=\tuple{\Sigma,\{q_\A,q_\A^h\},\{q_\A\},\delta_\A,\gamma_\A,\rho_\A}$ is an integral DMDA consisting of two states, as depicted in \cref{fig:undecidabilityContainment_A}.
Observe that the initial state $q_\A$ has self loops for every alphabet letter in $\withouthalt$ with weights and discount factors according to the primal functions, and a transition $(q_\A,\halt, q_\A^h)$ with weight of $\frac{14}{15}$ and a discount factor of $15$.

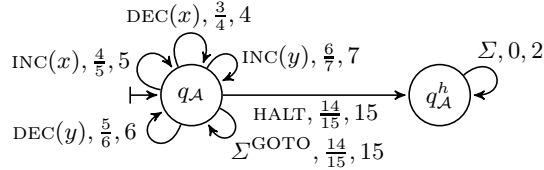
\begin{figure}
	\vspace*{-\baselineskip}
	\centering
	\setlength{\belowcaptionskip}{-\baselineskip}
	\begin{tikzpicture}[->,>=stealth',shorten >=1pt,auto,node distance=2cm, semithick, initial text=, every initial by arrow/.style={|->}]
		\node[initial, state] (q0) {$q_\A$};
		\node[state] (q1) [right of=q0, xshift=1.3cm] {$q_{\A}^h$};
		\path
		(q0) edge [loop left, in=125, out=170, looseness=5] node {$\inc(x),\frac{4}{5},5$} (q0)
		(q0) edge [loop above, in=70, out=115, looseness=5] node [yshift=-0.07cm]{$\dec(x),\frac{3}{4},4$} (q0)
		(q0) edge [loop right, in=25, out=60, looseness=5] node {$\inc(y),\frac{6}{7},7$} (q0)
		(q0) edge [loop right, in=-60, out=-25, looseness=5] node [yshift=-0.25cm, xshift=-0.15cm] {$\allgoto,\frac{14}{15},15$} (q0)
		(q0) edge [loop left, in=-150, out=-110, looseness=5] node [xshift=-0.1cm, yshift=0.1cm] {$\dec(y),\frac{5}{6},6$} (q0)
		
		(q0) edge [below] node [yshift=0.05cm] {$\halt,\frac{14}{15},15$} (q1)
		(q1) edge [loop above, in=0, out=40, looseness=5] node [xshift = 0.2cm, yshift=0.05cm]{$\Sigma,0,2$} (q1)
		;
	\end{tikzpicture}
	\caption{\label{fig:undecidabilityContainment_A}The DMDA $\A$ constructed for the proof of \cref{cl:undecidabilityContainment}.}
\end{figure}

The integral NMDA $\B=\tuple{\Sigma,Q_\B,\iota_\B,\delta_\B,\gamma_\B,\rho_\B}$ is the union of the following eight gadgets (checkers), each responsible for checking a certain type of violation in the description of a 0-halting run of $\M$.
It also has the states $\qfr,\qhalt\in Q_\B$ such that for all $\sigma\in \Sigma$, there are 0-weighted transitions $(\qfr,\sigma,\qfr)\in\delta_\B$ and $(\qhalt,\sigma,\qhalt)\in\delta_\B$ with an arbitrary discount factor.
Observer that in all of $\B$'s gadgets, the transition over the letter \halt to $\qhalt$ has a weight higher than the weight of the corresponding transition in $\A$, so that when no violation is detected, the value of $\B$ on a word is higher than the value of $\A$ on it.

\checker{1. Halt}{}
This gadget, depicted in \cref{fig:haltChecker}, checks for violations of non-halting runs.
Observe that its initial state $\qhc$ has self loops identical to those of $\A$'s initial state, a transition to $\qhalt$ over \halt with a weight higher than the corresponding weight in $\A$, and a transition to the state $q_{\mathsf{last}}$ over every letter that is not \halt, ``guessing'' that the run ends without a \halt command.

\begin{figure}
	\vspace*{-\baselineskip}
	\centering
	\setlength{\belowcaptionskip}{-\baselineskip}
	\begin{tikzpicture}[->,>=stealth',shorten >=1pt,auto,node distance=2cm, semithick, initial text=, every initial by arrow/.style={|->},
		every state/.style={ 
			inner sep=0pt}]
		\node[initial, state] (q0) {$\qhc$};
		\node[state] (q1) [right of=q0, xshift=1.3cm] {$\qhalt$};
		
		\node[state] (q2) [below of=q1, xshift=-1.2cm,yshift=0.5cm] {$q_{\mathsf{last}}$};
		\node[state] (q3) [right of=q2, xshift=0.5cm] {$\qfr$};
		
		\path
		(q0) edge [loop left, in=125, out=170, looseness=5] node {$\inc(x),\frac{4}{5},5$} (q0)
		(q0) edge [loop above, in=70, out=115, looseness=5] node {$\dec(x),\frac{3}{4},4$} (q0)
		(q0) edge [loop right, in=25, out=60, looseness=5] node {$\inc(y),\frac{6}{7},7$} (q0)
		(q0) edge [loop below, in=-95, out=-55, looseness=5] node [align=center] {\footnotesize$\allgoto$,\\$\frac{14}{15},15$} (q0)
		(q0) edge [loop left, in=-150, out=-110, looseness=5] node [xshift=-0.1cm, yshift=0.1cm] {$\dec(y),\frac{5}{6},6$} (q0)
		
		(q0) edge [below] node [yshift=0.05cm]{\footnotesize$\halt$,$\frac{15}{16},16$} (q1)
		(q1) edge [loop above, in=0, out=40, looseness=5] node [xshift = 0.2cm, yshift=0.05cm]{\footnotesize$\Sigma,0,2$} (q1)
		
		(q0) edge [right] node [xshift = 0.3cm, yshift=-0.1cm] {\footnotesize$\withouthalt,0,2$} (q2)
		(q2) edge [below] node {\footnotesize$\Sigma,2,2$} (q3)
		(q3) edge [loop above, in=0, out=40, looseness=5] node [xshift = 0.2cm, yshift=0.05cm]{\footnotesize$\Sigma,0,2$} (q3)
		;
	\end{tikzpicture}
	\caption{\label{fig:haltChecker}The Halt Checker in the NMDA $\B$.}
\end{figure}
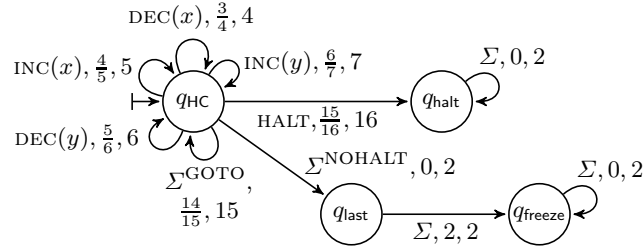

\checker{2. Negative-Counters}{}
The second gadget, depicted in \cref{fig:NegativeCounter}, checks that the input prefix $u$ has no more $\dec(c)$ than $\inc(c)$ commands for each counter $c\in\{x,y\}$.
It is similar to $\A$, however having self loops in its initial states that favor $\dec(c)$ commands when compared to $\A$. 

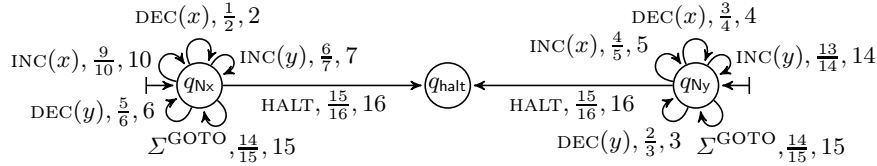
\begin{figure}
	\centering
	\setlength{\belowcaptionskip}{-\baselineskip}
	\begin{tikzpicture}[->,>=stealth',shorten >=1pt,auto,node distance=2cm, semithick, initial text=, every initial by arrow/.style={|->}, every state/.style={inner sep=0pt, minimum size=0.6cm}]
		\node[initial, state] (q0) {$q_{\mathsf{Nx}}$};
		\node[state] (q1) [right of=q0, xshift=1.3cm] {$\qhalt$};
		\path
		(q0) edge [loop left, in=125, out=170, looseness=5] node {$\inc(x),\frac{9}{10},10$} (q0)
		(q0) edge [loop above, in=70, out=115, looseness=5] node {$\dec(x),\frac{1}{2},2$} (q0)
		(q0) edge [loop right, in=25, out=60, looseness=5] node {$\inc(y),\frac{6}{7},7$} (q0)
		(q0) edge [loop below, in=-80, out=-40, looseness=5] node {\footnotesize$\allgoto$,$\frac{14}{15},15$} (q0)
		(q0) edge [loop left, in=-150, out=-110, looseness=5] node [xshift=-0.1cm, yshift=0.1cm] {$\dec(y),\frac{5}{6},6$} (q0)
		
		(q0) edge [below] node [yshift=0.05cm] {$\halt,\frac{15}{16},16$} (q1)
		;
		
		\node[initial right, state] (q0) [right of=q1, xshift=1.3cm]{$q_{\mathsf{Ny}}$};
		\path
		(q0) edge [loop left, in=125, out=170, looseness=5] node [yshift=0.2cm]{$\inc(x),\frac{4}{5},5$} (q0)
		(q0) edge [loop above, in=70, out=115, looseness=5] node {$\dec(x),\frac{3}{4},4$} (q0)
		(q0) edge [loop right, in=25, out=60, looseness=5] node {$\inc(y),\frac{13}{14},14$} (q0)
		(q0) edge [loop below, in=-80, out=-40, looseness=5] node [xshift=0.7cm] {\footnotesize$\allgoto$,$\frac{14}{15},15$} (q0)
		(q0) edge [loop left, in=-150, out=-110, looseness=5] node [xshift=0.3cm, yshift=-0.3cm] {$\dec(y),\frac{2}{3},3$} (q0)
		
		(q0) edge [below] node [yshift=0.05cm] {$\halt,\frac{15}{16},16$} (q1)
		;
				
	\end{tikzpicture}
	\caption{\label{fig:NegativeCounter}The negative-counters checker, on the left for $x$ and on the right for $y$, in the NMDA $\B$.}
\end{figure}

\checker{3. Positive-Counters}{}
The third gadget, depicted in \cref{fig:balancedCounters}, checks that for every $c\in\{x,y\}$, the input prefix $u$ has no more $\inc(c)$ than $\dec(c)$ commands.
It is similar to $\A$, while having self loops in its initial state according to the dual functions rather than the primal ones. 

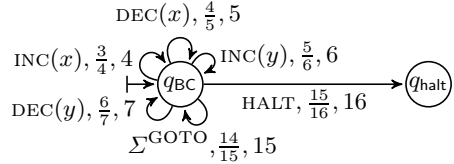
\begin{figure}
	\vspace*{-\baselineskip}
	\centering
	\setlength{\belowcaptionskip}{-\baselineskip}
	\begin{tikzpicture}[->,>=stealth',shorten >=1pt,auto,node distance=2cm, semithick, initial text=, every initial by arrow/.style={|->}, every state/.style={inner sep=0pt, minimum size=0.6cm}]
		\node[initial, state] (q0) {$q_{\mathsf{BC}}$};
		\node[state] (q1) [right of=q0, xshift=1.3cm] {$\qhalt$};
		\path
		(q0) edge [loop left, in=125, out=170, looseness=5] node {$\inc(x),\frac{3}{4},4$} (q0)
		(q0) edge [loop above, in=70, out=115, looseness=5] node {$\dec(x),\frac{4}{5},5$} (q0)
		(q0) edge [loop right, in=25, out=60, looseness=5] node {$\inc(y),\frac{5}{6},6$} (q0)
		(q0) edge [loop below, in=-80, out=-40, looseness=5] node {\footnotesize$\allgoto$,$\frac{14}{15},15$} (q0)
		(q0) edge [loop left, in=-150, out=-110, looseness=5] node [xshift=-0.1cm, yshift=0.1cm] {$\dec(y),\frac{6}{7},7$} (q0)
		
		(q0) edge [below] node [yshift=0.05cm] {$\halt,\frac{15}{16},16$} (q1)
		;
	\end{tikzpicture}
	\caption{\label{fig:balancedCounters}The Positive-Counters Checker in the NMDA $\B$.}
\end{figure}

\checker{4. Command}{}
The next gadget checks for local violations of successive commands. That is, it makes sure that the letter $w_i$ represents a command that can follow the command represented by $w_{i-1}$ in $\M$, ignoring the counter values. 
For example, if the command in location $l_2$ is $\inc(x)$, then from state $q_2$, which is associated with $l_2$, we move with the letter $\inc(x)$ to $q_3$, which is associated with $l_3$.
The test is local, as this gadget does not check for violations involving illegal jumps due to the values of the counters. 
An example of the command checker for the counter machine in \cref{fig:machineExample} is given in \cref{fig:commandCheckerExample}.

\begin{figure}
	\vspace*{-\baselineskip}
	\centering
	\setlength{\belowcaptionskip}{-\baselineskip}
	\begin{tikzpicture}[->,>=stealth',shorten >=1pt,auto,node distance=2.3cm, semithick, initial text=, every initial by arrow/.style={|->}, every state/.style={ 
			inner sep=0pt, minimum size=0.4cm}]
		\node[initial above, state] (q1) {$q_1$};
		\node[state] (q2) [right of=q1] {$q_2$};
		\node[state] (q3) [right of=q2] {$q_3$};
		\node[state] (q4) [right of=q3] {$q_4$};
		\node[state] (q5) [right of=q4] {$q_5$};
		\node[state] (q6) [right of=q5] {$q_6$};
		\node[state] (hlt) [below of=q6, yshift=1cm] {$\qhalt$};
		\node[state] (fr) [below of=q3, yshift=1cm] {$\qfr$};
		\path
		(q1) edge [above] node [align=center]{\footnotesize$\inc(x),\frac{4}{5},5$} (q2)
		(q2) edge [above] node [align=center, xshift=-0.1cm]{\footnotesize$\inc(x),\frac{4}{5},5$} (q3)
		(q3) edge [loop above,out=80,in=130, looseness=11] node [align=center, xshift=-1.2cm]{\footnotesize{$(\goto l_3,x=0),\frac{14}{15},15$}} (q3)
		(q3) edge [above] node [align=right, xshift=0.15cm]{\footnotesize{$\mbox{\sc goto } l_4$}\\$x>0,\frac{14}{15},15$} (q4)
		(q4) edge [above] node [align=center]{\footnotesize$\dec(x),\frac{3}{4},4$} (q5)
		(q5) edge [above] node [align=center]{\footnotesize$(\goto l_6,x=0),$\\$\frac{14}{15},15$} (q6)
		(q5) edge [above,out=120,in=60] node [align=center, xshift=0.4cm]{\footnotesize$(\goto l_3,x>0),\frac{14}{15},15$} (q3)
		(q6) edge [left] node [align=center]{\footnotesize$\halt,$\\$\frac{15}{16},16$} (hlt)
		(q1) edge [below] node [align=center,xshift=-0.7cm]{\footnotesize$\Sigma\setminus\{\inc(x)\},$\\\footnotesize$0,2$} (fr)
		(q2) edge [below] node [align=center,xshift=-0.7cm]{} (fr)
		(q3) edge [below] node [align=center,xshift=-0.7cm]{} (fr)
		(q4) edge [below] node [align=center,xshift=-0.7cm]{} (fr)
		(q5) edge [below] node [align=center,xshift=-0.7cm]{} (fr)
		(q6) edge [below] node [align=center,xshift=0.7cm]{\footnotesize$\Sigma\setminus\{\halt\},$\\\footnotesize$0,2$} (fr)
		;
	\end{tikzpicture}
	\caption{\label{fig:commandCheckerExample}The command checker that corresponds to the counter machine in \cref{fig:machineExample}.}
\end{figure}
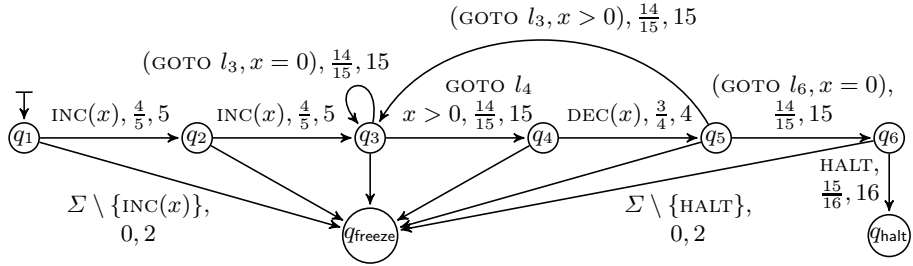

The command checker, which is a DMDA, consists of states $q_1,\ldots,q_n$ that correspond to the commands $l_1,\ldots,l_n$, and the states $\qhalt$ and $\qfr$. 
For two locations $j$ and $k$, there is a transition from $q_j$ to $q_k$ on the letter $\sigma$ iff $l_k$ can {\em locally follow\/} $l_j$ in a run of $\M$ that has $\sigma$ in the corresponding location of the command trace. 
That is, either $l_j$ is a $\goto l_k$ command (meaning $l_j=\sigma=\goto l_k$), $k$ is the next location after $j$ and $l_j$ is an $\inc$ or a $\dec$ command (meaning $k=j+1$ and $l_j=\sigma\in\incdec$), $l_j$ is an $\jz{c}{l_k}{l_{k'}}$ command with $\sigma=(\goto l_k,c=0)$, or $l_j$ is an $\jz{c}{l_s}{l_k}$ command with $\sigma=(\goto l_k,c>0)$.
The weights and discount factors of the $\withouthalt$ transitions mentioned above are according to the primal functions $\gamma_p$ and $\rho_p$ respectively.
For every location $j$ such that $l_j=\halt$, there is a transition from $q_j$ to $\qhalt$ labeled by the letter $\halt$ with a weight of $\frac{15}{16}$ and a discount factor of $16$.
Every other transition that was not specified above leads to $\qfr$ with weight $0$ and some discount factor.

\checker{5,6. Zero-Jump}{s}
	The next gadgets, depicted in \cref{fig:zeroJumpChecker}, check for violations in conditional jumps. In this case, we use a different checker instance for each counter $c\in\{x,y\}$, ensuring that for every $\jz{c}{l_k}{l_{k'}}$ command,  if the jump $\goto l_k$ is taken, then the value of $c$ is indeed $0$.

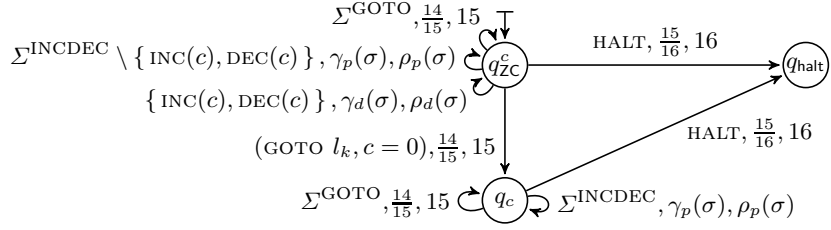
\begin{figure}
	\vspace*{-\baselineskip}
	\centering
	\setlength{\belowcaptionskip}{-\baselineskip}
	\begin{tikzpicture}[->,>=stealth',shorten >=1pt,auto,node distance=4cm, semithick, initial text=, every initial by arrow/.style={|->}, every state/.style={inner sep=0pt, minimum size=0.6cm}]
		\node[initial above, state] (q0) {$\qzc$};
		\node[state] (q1) [below of=q0, yshift=2.2cm] {$q_c$};
		\node[state] (hlt) [right of=q0] {$\qhalt$};
		\path
		(q0) edge [loop left, in=110, out=140, looseness=5] node [align=center, yshift=0.2cm, xshift=0.1cm] {\footnotesize$\allgoto$,$\frac{14}{15},15$} (q0)
		(q0) edge [loop left, in=160, out=-170, looseness=5] node [yshift=0.07cm] {$\incdec\setminus\set{\inc(c), \dec(c)},\gamma_p(\sigma),\rho_p(\sigma)$} (q0)
		(q0) edge [loop left, in=-150, out=-120, looseness=5] node [yshift=-0.1cm] {$\set{\inc(c), \dec(c)},\gamma_d(\sigma),\rho_d(\sigma)$} (q0)
		(q0) edge [left] node [align=center,yshift=-0.2cm] {\footnotesize$(\goto l_k,c=0),$\footnotesize$\frac{14}{15},15$} (q1)
		
		(q1) edge [loop right, in=-20, out=10, looseness=6] node {$\incdec,\gamma_p(\sigma),\rho_p(\sigma)$} (q1)
		
		(q1) edge [loop left, in=165, out=-160, looseness=6] node [align=center] {\footnotesize$\allgoto$,$\frac{14}{15},15$} (q1)
		
		(q0) edge [above] node [align=center] {$\halt,\frac{15}{16},16$} (hlt)
		
		(q1) edge [right] node [align=center, xshift=0.3cm] {$\halt,\frac{15}{16},16$} (hlt)
		;
	\end{tikzpicture}
	\caption{\label{fig:zeroJumpChecker}The Zero-Jump Checker (for a counter $c\in\set{x,y}$) in the NMDA $\B$.}
\end{figure}

Intuitively, $\qzc$ profits from words that have more $\inc(c)$ than $\dec(c)$ letters, while $q_c$ continues like $\A$.
If the move to $q_c$ occurred after a balanced number of $\inc(c)$ and $\dec(c)$, as it should be in a real command trace, neither the prefix word before the move to $q_c$, nor the suffix word after it result in a profit. 
Otherwise, provided that the counter is $0$ at the end of the run (as guaranteed by the negative- and positive-counters checkers), both prefix and suffix words get profits, resulting in a smaller value for the run. 

\checker{7,8. Positive-Jump}{s}
These gadgets, depicted in \cref{fig:positiveJumpChecker}, are dual to the zero-jump checkers, checking for the dual violations in conditional jumps. 
Similarly to the zero-jump checkers, we have a different instance for each counter $c\in\{x,y\}$, ensuring that for every $\jz{c}{l_k}{l_{k'}}$ command, if the jump $\goto l_{k'}$ is taken, then the value of $c$ is indeed greater than $0$.

\begin{figure}
	\centering
	\begin{tikzpicture}[->,>=stealth',shorten >=1pt,auto,node distance=2.5cm, semithick, initial text=, every initial by arrow/.style={|->},
		every state/.style={ 
			inner sep=0pt, minimum size=0.8cm}]
		\node[initial above, state] (q0) {$\qpc{0}$};
		\node[state] (q1) [below of=q0, yshift=0.2cm] {$\qpc{1}$};
		\node[state] (q2) [below of=q1, yshift=0.8cm] {$\qpc{2}$};
		\node[state] (qfr) [left of=q1,yshift=1cm] {$\qfr$};
		\node[state] (hlt) [right of=q1,xshift=2cm,yshift=0.5cm] {$\qhalt$};
		\path
		(q0) 
		edge 
		[loop left, out=150, in=120, looseness=4] 
		node 
		[align=center]
		{\footnotesize$\allgoto,\frac{14}{15},15$} 
		(q0)
		
		(q0) 
		edge 
		[loop right, out=60, in=30, looseness=4] 
		node 
		[align=center] {\footnotesize$\incdec\setminus\set{\inc(c)},\gamma_p(\sigma),\rho_p(\sigma)$}
		(q0)
		
		(q0) 
		edge 
		[right] 
		node 
		[align=left] 
		{$\inc(c),$\\\footnotesize$\gamma_d(\inc(c)),$\\$\rho_d(\inc(c))$}
		(q1)
		
		(q0) 
		edge 
		[above right] 
		node 
		[align=center, xshift=-0.5cm, yshift=0.1cm] 
		{\footnotesize$\halt,\frac{15}{16},16$}
		(hlt)
		
		(q0) 
		edge 
		[above left] 
		node 
		[align=center] 
		{\footnotesize$(\goto l_{k'},c>0),0,2$} 
		(qfr)
		
		(q1) 
		edge 
		[loop right, out=-30, in=-60, looseness=4] 
		node 
		[align=left,yshift=0.4cm] 
		{\footnotesize$\incdec,$\\\footnotesize$\gamma_p(\sigma),\rho_p(\sigma)$} 
		(q1)
		
		(q1) 
		edge 
		[loop left, out=-150, in=-180, looseness=4] 
		node 
		[align=center] 
		{\footnotesize$\allgoto,\frac{14}{15},15$} 
		(q1)
		
		(q1) 
		edge 
		[left] 
		node 
		[align=center] 
		{\footnotesize$(\goto l_{k'},c>0),\frac{14}{15},15$} 
		(q2)
		
		(q2) 
		edge 
		[loop right, out=0, in=-30, looseness=5] 
		node 
		[align=center] 
		{\footnotesize$\incdec\setminus\set{\inc(c),\dec(c)},\gamma_p(\sigma),\rho_p(\sigma)$} 
		(q2)
		
		(q2) 
		edge 
		[loop left, out=-140, in=-165, looseness=5] 
		node 
		[align=center, yshift=-0.1cm] 
		{\footnotesize$\allgoto,\frac{14}{15},15$} 
		(q2)
		
		(q2) 
		edge 
		[loop left, out=180, in=155, looseness=5] 
		node 
		[align=center, yshift=0.1cm]
		{$\set{\inc(c),\dec(c)},\gamma_d(\sigma),\rho_d(\sigma)$} 
		(q2)
		
		(q2) 
		edge 
		[below right] 
		node 
		[align=center] 
		{\footnotesize$\halt,\frac{15}{16},16$} 
		(hlt)
		
		(q1) 
		edge 
		[above] 
		node 
		[align=center,xshift=0.3cm] 
		{\footnotesize$\halt,1,2$} 
		(qfr)
		;
	\end{tikzpicture}
	\caption{\label{fig:positiveJumpChecker}The Positive-Jump Checker (for a counter $c$) in the NMDA $\B$.}
\end{figure}
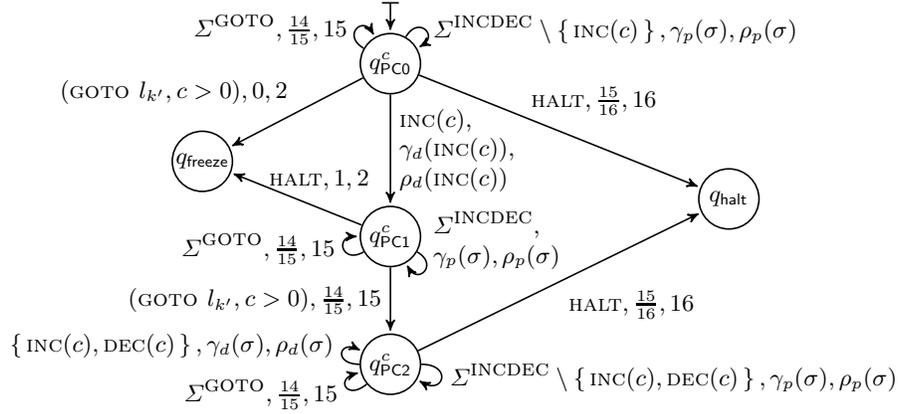

Intuitively, if the counter is $0$ on a $(\goto l_{k'},c>0)$ command when there was no $\inc(c)$ command yet, the gadget benefits by moving from $\qpc{0}$ to $\qfr$. If there was an $\inc(c)$ command, it benefits by having the dual functions on the move from $\qpc{0}$ to $\qpc{1}$ over $\inc(c)$ and the primal functions on one additional self loop of $\qpc{1}$ over $\dec(c)$.

\begin{lemma}
	\label{cl:undecidabilityContainment}
	Given a two-counter machine $\M$, we can compute an integral DMDA $\A$ and an integral NMDA $\B$ on finite words, such that $\M$ $0$-halts iff there exists a word $w\in\Sigma^+$ such that $\B(w)\geq \A(w)$ iff there exists a word $w\in\Sigma^+$ such that $\B(w) > \A(w)$.
\end{lemma}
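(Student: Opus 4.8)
The plan is to prove the three conditions equivalent via the cycle of implications
\[
\text{$\M$ $0$-halts}
\ \implies\
\bigl(\exists\, w\in\Sigma^+ : \B(w)>\A(w)\bigr)
\ \implies\
\bigl(\exists\, w\in\Sigma^+ : \B(w)\geq\A(w)\bigr)
\ \implies\
\text{$\M$ $0$-halts} ,
\]
whose middle step is trivial. Writing $u=\pref_{\halt}(w)$, it then suffices to prove two claims about an arbitrary $w\in\Sigma^+$: \emph{(Completeness)} if $w$ is the command trace of the $0$-halting run of $\M$ --- a nonempty word ending with $\halt$, well defined since $\M$ is deterministic --- then \emph{every} run of $\B$ on $w$ has value strictly above $\A(w)$, hence $\B(w)>\A(w)$; and \emph{(Soundness)} if $w$ does not encode a $0$-halting run of $\M$ (i.e.\ $u$ is not a command trace, or $\tau_u$ does not fit the run of $\M$, or it fits but the run does not $0$-halt), then \emph{some} gadget of $\B$ has a run on $w$ of value strictly below $\A(w)$, hence $\B(w)<\A(w)$. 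Given these, the first implication of the cycle is Completeness on the honest trace, and the last is the contrapositive of Soundness.

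Both claims rest on \cref{lem:undecidabilityContainment}. Every self-loop, jump and $\halt$-transition appearing in $\A$ or in a gadget of $\B$ carries a weight $\frac{\lambda-1}{\lambda}$ against its own discount factor $\lambda$, so the value of any run that ends in a $0$-weight sink is $1-\frac1D$, where $D$ is the product of the discount factors read along the way; crucially, $D$ is insensitive to the order of the letters, depending only on how many of each kind were read. (A handful of ``bail-out'' transitions, weighted differently, make certain degenerate runs evaluate to exactly $1$; these too will come in handy.) In particular $\A(w)=\A(u)=1-\frac1D$ for a product $D>1$, so $\A(w)<1$. Two design choices then drive everything. First, $\B$'s $\halt$-transitions use discount factor $16$ where $\A$'s uses $15$, inserting a fixed factor $\frac{16}{15}>1$ into $D$: a uniform reward keeping $\B(w)$ strictly above $\A(w)$ whenever $\B$ reads $w$ faithfully. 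Second, by \cref{eqn:primalDual}, along any segment on which a gadget may switch from the \emph{primal} pair to the \emph{dual} pair, treating an $\inc(c)$ multiplies $D$ by $\frac{\rho_d(\inc(c))}{\rho_p(\inc(c))}<1$ and treating a $\dec(c)$ by its reciprocal $>1$; so the gadget undercuts $\A$ exactly when such a segment holds strictly more $\inc(c)$'s than $\dec(c)$'s, and overshoots it otherwise (a balanced segment being neutral up to the $\frac{16}{15}$ reward).

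For Soundness I would case on how $w$ is defective and route each defect to its checker. If $w$ has no $\halt$, the Halt checker branches to $q_{\mathsf{last}}$ on the final letter and stops there, dropping that letter's discount factor from $D$ and landing below $\A(w)$. Otherwise $u$ ends with $\halt$; if $u$ has more $\dec(c)$ than $\inc(c)$ (resp.\ more $\inc(c)$ than $\dec(c)$) for some $c$, the Negative-Counters (resp.\ Positive-Counters) checker, whose self-loops already encode the favourable bias, beats $\A$. Otherwise the counters of $u$ balance; if the letters of $u$ do not form a legal path in the control-flow graph of $\M$, the deterministic Command checker is forced into $\qfr$'s sink before reaching the $\halt$-reward, hence below $\A(w)$. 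Otherwise $u$ follows a legal path which, ending with the $\halt$ letter, ends at a $\halt$ command; since the counters of $u$ balance, if $\tau_u$ fitted the run of $\M$ that run would $0$-halt, contrary to hypothesis --- so $\tau_u$ does not fit the run of $\M$, and (non-jump steps leaving no choice) the first discrepancy is a conditional jump taken inconsistently with the counter value $u$ has faithfully accumulated up to that point. The matching Zero-Jump checker (if $\goto l_k$ was taken with $c>0$) or Positive-Jump checker (if $\goto l_{k'}$ was taken with $c=0$) guesses that jump and applies the dual pair to the material straddling it --- in the positive-jump case either escaping to $\qfr$ when $c$ has never been incremented, or spending its single dual $\inc(c)$ otherwise --- and beats $\A$; that the overall counts of $u$ were already forced to balance is exactly what makes the dual segments in the Positive-Jump checker net out in its favour. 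A word surviving all eight checkers therefore ends in $\halt$, has balanced counters, follows a legal control path, and takes every jump as $\M$ does, hence $\tau_u$ fits the run of $\M$ and it $0$-halts.

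For Completeness I would re-examine the eight gadgets on $w=\tau$ and verify that no run of any of them reaches $\A(w)$. The Command and the two Counters checkers have only their faithful run, which collects the $\frac{16}{15}$ reward with every bias neutralised since $\tau$ returns both counters to $0$; the Halt checker has its faithful run through $\qhalt$ (value $1-\tfrac{1}{16D'}>\A(w)$) and a family of detours through $q_{\mathsf{last}}$ and $\qfr$ (each value $1>\A(w)$). The crux --- and the step I expect to be the real obstacle --- is the two Jump checkers: one must show that \emph{whatever} the gadget guesses (where to leave its initial state(s), and, in the positive case, where to peel off its single dual $\inc(c)$ and where to move to $\qpc{2}$), every segment on which it uses the dual pair contains at least as many $\dec(c)$'s as $\inc(c)$'s, so the run still picks up a factor $\geq\frac{16}{15}$ over $\A$. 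This is precisely where $0$-haltedness enters essentially: $\tau$ never decrements a zero counter, so no prefix is $\inc(c)$-heavy where a zero-jump could be guessed, and $\tau$ is $0$-halting, so every suffix taken from a positive-counter position carries a net surplus of $\dec(c)$'s. Finally I would clear two harmless points: letters of $w$ occurring after its first $\halt$ affect nothing, as all automata have by then entered a $0$-weight sink; and the honest trace lies in $\Sigma^+$ because two-counter machines are deterministic and halting runs are finite.
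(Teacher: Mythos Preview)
Your proposal is correct and mirrors the paper's proof: the same two-part structure (your Completeness/Soundness are the paper's parts I/II), the same reliance on \cref{lem:undecidabilityContainment} to reduce run values to $1-\frac{1}{D}$, the same gadget-by-gadget case analysis, and the same key observations about the $\frac{16}{15}$ reward on \halt\ and the primal/dual swapping. The one detail you leave implicit that the paper spells out is that in the soundness direction the gain from a detected violation (a ratio of at least $\frac{7}{6}$) dominates the $\frac{16}{15}$ penalty on the \halt\ transition, so the gadget still lands strictly below $\A(w)$.
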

\iftoggle{completeProofs}{
\begin{proof}
Given a two-counter machine $\M$, consider the DMDA $\A$ and the NMDA $\B$ constructed in \cref{sec:TheReduction}, and an input word $w$. Let $u=\pref_{\halt}(w)$.

We prove the claim by showing that I) if $u$ correctly describes a 0-halting run of $\M$ then $\B(w)>\A(w)$, and II) if $u$ does not fit the actual run of $\M$, or if it does fit it, but the run does not 0-halt, then the violation is detected by $\B$, in the sense that $\B(w)<\A(w)$.

\vspace{5pt}\noindent{\bf I.} 	
We start with the case that $u$ correctly describes a 0-halting run of $\M$, and show that $\B(w)>\A(w)$.

Observe that in all of $\B$'s checkers, the transition over the \halt command to the $\qhalt$ state has a weight higher than the weight of the corresponding transition in $\A$. Thus, if a checker behaves like $\A$ over $u$, namely uses the primal functions, it generates a value higher than that of $\A$.

We show below that each of the checkers generates a value higher than the value of $\A$ on $u$ (which is also the value of $\A$ on $w$), also if it nondeterministically ``guesses a violation'', behaving differently than $\A$.

\lchecker{1. Halt}{}
Since $u$ does have the \halt command, the run of the halt checker on $u$, if guessing a violation, will end in the pair of transitions from $\qhc$ to $q_{\mathsf{last}}$ to $\qfr$ with discount factor $2$ and weights $0$ and $2$, respectively. 

Let $D$ be the accumulated discount factor in the gadget up to these pair of transitions. 
According to \cref{lem:undecidabilityContainment}, the accumulated weight at this point is $1-\frac{1}{D}$, hence the value of the run will be $1-\frac{1}{D} + \frac{1}{D}\cdot 0 + \frac{1}{2D}\cdot 2 = 1$, which is, according to \cref{lem:undecidabilityContainment}, larger than the value of $\A$ on any word.

\lchecker{2,3. Negative- and Positive-Counters}{s}
Since $u$ has the same number of $\inc(c)$ and $\dec(c)$ letters, by \cref{eqn:primalDual} and $\cref{lem:undecidabilityContainment}$, these gadgets and $\A$ will have the same value on the prefix of $u$ until the last transition, on which the gadgets will have a higher weight.

\lchecker{4. Command}{}
As this gadget is deterministic, it cannot ``guess a violation'', and its value on $u$ is larger than $\A(u)$ due to the weight on the \halt command.

\lchecker{5,6. Zero-Jump}{s}
Consider a counter $c\in\set{x,y}$ and a run $r$ of the gadget on $u$.
If $r$ did not move to $q_c$, we have $\B(r)>\A(w)$, similarly to the analysis in the negative- and positive-counters checkers.
Otherwise, denote the transition that $r$ used to move to $q_c$ as $t$.
Observe that since $u$ correlates to the actual run of $\M$, we have that $t$ was indeed taken when $c=0$.
In this case the value of the run will not be affected, since before $t$ we have the same number of $\inc(c)$ and $\dec(c)$ letters, and after $t$ we also have the same number of $\inc(c)$ and $\dec(c)$ letters. Hence, due to the last transition over the \halt command, we have $\B(r)>\A(u)$.

\lchecker{7,8. Positive-Jump}{s}
Consider a counter $c\in\set{x,y}$ and a run $r$ of the gadget on $u$.
If $r$ never reaches $\qpc{1}$, it has the same sequence of weights and discount factors as $\A$, except for the higher-valued \halt transition.
If $r$ reaches $\qpc{1}$ but never reaches $\qpc{2}$, since $u$ ends with a $\halt$ letter, we have that $r$ ends with a transition to $\qfr$ that has a weight of $1$, hence $\B(r)=1>\A(w)$.

If $r$ reaches $\qpc{2}$, let $u=y\con\inc(c)\con z\con v$ where $y$ has no $\inc(c)$ letters, $t=r[|y|+1+|z|]$ is the first transition in $r$ targeted at $\qpc{2}$, and $\alpha_c\geq 1$ is the value of the counter $c$ when $t$ is taken.
We have that $1+\numof{\inc(c)}{z}=\numof{\dec(c)}{z}+\alpha_c$. 
Since $u$ is balanced, we also have that $\numof{\dec(c)}{v}=\numof{\inc(c)}{v}+\alpha_c$.
For the first $\inc(c)$ letter, $r$ gets a discount factor of $\rho_d(\inc(c))=\rho_p(\dec(c))$.
All the following $\inc(c)$ and $\dec(c)$ letters contribute discount factors according to $\rho_p$ in $z$ and according to $\rho_d$ in $v$.
Hence, $r$ gets the discount factor $\rho_p(\dec(c))$ a total of 
\begin{align*}
	1+\numof{\dec(c)}{z}+\numof{\inc(c)}{v}&=
	1+1+\numof{\inc(c)}{z}-\alpha_c+\numof{\inc(c)}{v}\\
	&=
	\numof{\inc(c)}{u}+1-\alpha_c\\&\leq \numof{\inc(c)}{u}=\numof{\dec(c)}{u}
\end{align*}
times, and the discount factor $\rho_p(\inc(c))$ a total of 
\begin{align*}
	\numof{\inc(c)}{z}+\numof{\dec(c)}{v}&=
	\numof{\inc(c)}{z}+\numof{\inc(c)}{v}+\alpha_c\\&=
	\numof{\inc(c)}{u}-1+\alpha_c\geq \numof{\inc(c)}{u}
\end{align*}
times.

Therefore, the value of $r$ is at least as big as the value of $\A$ on the prefix of $u$ until the \halt transition, and due to the higher weight of $r$ on the latter, we have $\B(r) > \A(u)$.

\vspace{5pt}\noindent{\bf II.} 	
We continue with the case that $u$ does not correctly describe a 0-halting run of $\M$, and show that $\B(w) <\A(w)$.	
Observe that the incorrectness must fall into one of the following cases, each of which results in a lower value of one of $\B$'s gadgets on $u$, compared to the value of $\A$ on $u$:
\begin{itemize}
	\item {\it The word $u$ has no \halt command.} In this case the minimal-valued run of the halt checker on $u$ will be the same as of $\A$ until the last transition, on which the halt checker will have a $0$ weight, compared to a strictly positive weight in $\A$.
	
	\item {\it The word $u$ does not describe a run that ends up with value $0$ in both counters.}
	Then there are the following sub-cases:
	\begin{itemize}
		\item {\it The word $u$ has more $\dec(c)$ than $\inc(c)$ letters for some counter $c\in\{x,y\}$}.
		For $c=x$, in the negative-counters checker, more discount factors were changed from $4$ to $2$  than those changed from $5$ to $10$, compared to their values in $\A$, implying that the total value of the gadget until the last letter will be lower than of $\A$ on it.
		For $c=y$, we have a similar analysis with respect to the discount factors $6;3$, and $7;14$.
		\item {\it The word $u$ has more $\inc(c)$ than $\dec(c)$ letters for some counter $c\in\{x,y\}$}.
		By \cref{eqn:primalDual} and $\cref{lem:undecidabilityContainment}$, the value of the positive-counters checker until the last transition will be lower than of $\A$ until the last transition.
	\end{itemize}
	
	Observe, though, that the weight of the gadgets on the \halt transition ($16$) is still higher than that of $\A$ on it ($15$). 
	Nevertheless, since a ``violation detection'' results in replacing at least one discount factor from $4$ to $2$, from $6$ to $3$, from $5$ to $4$, or from $7$ to $6$ (and replacing the corresponding weights, for preserving the $\frac{\rho-1}{\rho}$ ratio), and the ratio difference between $16$ and $15$ is less significant than between the other pairs of weights, we have that the gadget's value and therefore $\B$'s value on $u$ is smaller than $\A(u)$.
	Indeed, by \cref{lem:undecidabilityContainment} $\A(u)=1-\frac{1}{D_\A}$, where $D_\A$ is the multiplication of the discount factors along $\A$'s run, and $\B(u)\leq 1-(\frac{1}{D_\A}\cdot \frac{7}{6}\cdot\frac{15}{16}) < 1-\frac{1}{D_\A} = \A(u)$.
	
	\item {\it The word $u$ does not correctly describe the run of $\M$}. Then there are the following sub-cases:
	\begin{itemize}
		\item {\it The incorrect description does not relate to conditional jumps}. 
		Then the command-checker has the same weights and discount factors as $\A$ on the prefix of $u$ until the incorrect description, after which it has $0$ weights, compared to strictly positive weights in $\A$.
		\item {\it The incorrect description relates to conditional jumps}. Then there are the following sub-sub-cases:
		\begin{itemize}
			\item {\it A counter $c>0$ at a position $i$ of $\M$'s run, while $u[i]=\goto l_k,c=0$}. Let $v=u[0..i{-}1]$ and $u=v\con v'$, and consider the run $r$ of the zero-jump checker on $u$ that moves to $q_c$ after $v$.
			Then $\numof{\inc(c)}{v}>\numof{\dec(c)}{v}$ and $\numof{\inc(c)}{v'}<\numof{\dec(c)}{v'}$. (We may assume that the total number of $\inc(c)$ and $\dec(c)$ letters is the same, as otherwise one of the previous checkers detects it.)
			
			All the $\inc(c)$ and $\dec(c)$ transitions in $r[0..i{-}1]$ have weights and discount factors according to the dual functions, and those transitions in $r[i..|w|{-}1]$ have weights and discount factors according to the primal functions.
			Therefore, compared to $\A$, more weights changed from $\gamma_p(\inc(c))$ to $\gamma_d(\inc(c))=\gamma_p(\dec(c))$ than weights changed from $\gamma_p(\dec(c))$ to $\gamma_d(\dec(c))=\gamma_p(\inc(c))$, resulting in a lower  total value of $r$ than of $\A$ on $u$. (As shown for the negative- and positive-counters checkers, the higher weight of the $\halt$ transition is less significant than the lower values above.)
			
			\item {\it A counter $c=0$ at a position $i$ of $\M$'s run, while $u[i]=\goto l_k,c>0$}.
			Let $r$ be a minimal-valued run of the positive-jump checker on $u$.
			
			If  there are no $\inc(c)$ letters in $u$ before position $i$, $r$ will have the same weights and discount factors as $\A$ until the $i$'s letter, on which it will move from $\qpc{1}$ to $\qfr$, continuing with $0$-weight transitions, compared to strictly positive ones in $\A$.
			
			Otherwise, we have that the first $\inc(c)$ letter of $u$ takes $r$ from $\qpc{0}$ to $\qpc{1}$ with a discount factor of $\rho_d(\inc(c))$. 
			Then in $\qpc{1}$ we have more $\dec(c)$ transitions than $\inc(c)$ transitions, and in $\qpc{2}$ we have the same number of $\dec(c)$ and $\inc(c)$ transitions. (We may assume that $u$ passed the previous checkers, and thus has the same total number of $\inc(c)$ and $\dec(c)$ letters.)
			Hence, we get two more discount factors of $\rho_d(\inc(c))$ than $\rho_p(\inc(c))$, resulting in a value smaller than $\A(u)$. (As in the previous cases, the higher value of the \halt transition is less significant.)
		\end{itemize}
	\end{itemize}
\end{itemize}
\end{proof}}
{The proof uses the construction presented above, and can be found in \cite{BH23}.}

\subsection{Undecidability of Comparison}\label{sec:Undecidability}
For finite words, the undecidability result directly follows from \cref{cl:undecidabilityContainment} and the undecidability of the 0-halting problem of counter machines \cite{Min67}.
\begin{theorem}\label{cl:ContainmentFiniteWordsUndecidable}
	Strict and non-strict containment of (integral) NMDAs on finite words are undecidable. More precisely, the problems of deciding for given integral NMDA $\N$ and integral DMDA $\D$ whether $\N(w) \leq \D(w)$ for all finite words $w$ and whether $\N(w) < \D(w)$ for all finite words $w$.
\end{theorem}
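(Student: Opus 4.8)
The plan is to read off \cref{cl:ContainmentFiniteWordsUndecidable} as a direct corollary of \cref{cl:undecidabilityContainment} together with the undecidability of the $0$-halting problem of two-counter machines \cite{Min67}: essentially all of the work has already been done. Given a two-counter machine $\M$, \cref{cl:undecidabilityContainment} lets us \emph{compute} an integral DMDA $\A$ and an integral NMDA $\B$ on finite words such that $\M$ $0$-halts iff there is a word $w\in\Sigma^+$ with $\B(w)\ge\A(w)$, iff there is a word $w\in\Sigma^+$ with $\B(w)>\A(w)$.

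The key (and essentially only) step is to pass to the contrapositive of these equivalences: $\M$ does \emph{not} $0$-halt iff $\B(w)<\A(w)$ for every $w\in\Sigma^+$, and equally iff $\B(w)\le\A(w)$ for every $w\in\Sigma^+$. Consequently, setting $\N:=\B$ and $\D:=\A$, any algorithm deciding whether $\N(w)\le\D(w)$ for all $w$, or whether $\N(w)<\D(w)$ for all $w$, would decide whether $\M$ $0$-halts, which is impossible. Since $\A$ and $\B$ are integral, this already yields the refined statement about integral NMDAs $\N$ and integral DMDAs $\D$ on finite words, which is exactly the theorem.

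I expect no genuine obstacle here, as all the difficulty is absorbed into \cref{cl:undecidabilityContainment}; the only bookkeeping point is the domain of quantification. The lone run of any discounted-sum automaton on the empty word is trivial and has value $0$, so $\A(\emptyword)=\B(\emptyword)=0$: including $\emptyword$ merely adds the true inequality $0\le 0$ to the non-strict problem (hence is harmless), whereas it would make the strict problem vacuously false. The quantification ``for all finite words $w$'' in the theorem is therefore to be understood over $\Sigma^+$, over which the reduction of \cref{cl:undecidabilityContainment} applies verbatim.
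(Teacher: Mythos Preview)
Your proposal is correct and matches the paper's own approach: the paper simply states that the theorem ``directly follows from \cref{cl:undecidabilityContainment} and the undecidability of the $0$-halting problem of counter machines \cite{Min67}'', which is precisely the contrapositive argument you spell out. Your additional remark about the empty word and the quantification over $\Sigma^+$ is a sensible clarification that the paper leaves implicit.
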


For infinite words, undecidability of non-strict containment also follows from the reduction given in \cref{sec:TheReduction}, as the reduction considers prefixes of the word until the first \halt command. 
We leave open the question of whether strict containment is also undecidable for infinite words. The problem with the latter is that a \halt command might never appear in an infinite word $w$ that incorrectly describes a halting run of the two-counter machine, in which case both automata $\A$ and $\B$ of the reduction will have the same value on $w$. 
On words $w$ that have a \halt command but do not correctly describe a halting run of the two-counter machine we have $\B(w)<\A(w)$, and on a word $w$ that does correctly describe a halting run we have $\B(w)>\A(w)$. Hence, the reduction only relates to whether $\B(w)\leq\A(w)$ for all words $w$, but not to whether $\B(w) <\A(w)$ for all words $w$.

\begin{theorem}\label{cl:NonStrictContainmentInfiniteWordsUndecidable}
	Non-strict containment of (integral) NMDAs on infinite words is undecidable. More precisely, the problem of deciding for given integral NMDA $\N$ and integral DMDA $\D$ whether $\N(w) \leq \D(w)$ for all infinite words $w$.
\end{theorem}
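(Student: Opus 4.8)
The plan is to reuse, without any change, the automata $\A$ (the DMDA, playing the role of $\D$) and $\B$ (the NMDA, playing the role of $\N$) built in \cref{sec:TheReduction}, and to prove that $\M$ $0$-halts if and only if there is an infinite word $w$ with $\B(w) > \A(w)$, i.e.\ if and only if it is \emph{not} the case that $\B(w) \le \A(w)$ for all infinite words $w$. Undecidability then follows from the undecidability of the $0$-halting problem \cite{Min67}. The technical heart is a structural observation that lets us replace an infinite word by its \halt-prefix: in $\A$, and in \emph{every} one of the eight gadgets of $\B$, every transition reading the letter \halt leads to one of the absorbing, $0$-weighted states ($q_\A^h$ in $\A$, and $\qhalt$ or $\qfr$ in $\B$ -- including the special weight-$1$ transition $(\qpc{1},\halt,\qfr)$ of the positive-jump checker), and these states carry only $0$-weighted self-loops. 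Hence, for an infinite word $w$ that contains \halt, writing $u=\pref_{\halt}(w)\in\Sigma^+$: every run on $w$ is in an absorbing state after reading the \halt of $u$ and contributes nothing thereafter, while every run on $u$ ends in an absorbing state and so extends to a run on $w$ of equal value; therefore $\A(w)=\A(u)$ and $\B(w)=\B(u)$.

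For the ``only if'' direction, assume $\M$ $0$-halts. By \cref{cl:undecidabilityContainment} (part I of its proof) the command trace $w_0\in\Sigma^+$ of the $0$-halting run satisfies $\B(w_0) > \A(w_0)$, and $w_0$ contains \halt exactly once, at its end, so $w_0 = \pref_{\halt}(w_0 z)$ for any infinite word $z$. Applying the prefix identity above to the infinite word $w := w_0 z$ gives $\B(w) = \B(w_0) > \A(w_0) = \A(w)$, producing the required infinite witness.

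For the ``if'' direction, suppose some infinite $w$ has $\B(w) > \A(w)$. If $w$ contains no \halt, then $w\in\withouthalt^\omega$, the unique run of $\A$ stays in $q_\A$ along self-loops whose weights satisfy $w_i=\frac{\lambda_i-1}{\lambda_i}$, so by \cref{lem:undecidabilityContainment} its partial sums are $1-\tfrac{1}{\prod_j\lambda_j}$ and thus $\A(w)=1$; and the run of the Halt checker that remains in $\qhc$ forever has, by the same lemma, value $1$, so $\B(w)\le 1=\A(w)$, contradicting $\B(w) > \A(w)$. Hence $w$ contains \halt; taking $u=\pref_{\halt}(w)\in\Sigma^+$, the prefix identity gives $\B(u)=\B(w) > \A(w)=\A(u)$, and \cref{cl:undecidabilityContainment} yields that $\M$ $0$-halts. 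Combining both directions, deciding whether $\N(w)\le\D(w)$ for all infinite words $w$ is undecidable.

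The step I expect to be the main obstacle -- and the one requiring the most care -- is the absorbing-state bookkeeping: one must check, gadget by gadget, that in $\B$ \emph{every} transition on the letter \halt indeed lands in $\qhalt$ or $\qfr$ and that these states are genuinely absorbing with $0$-weighted self-loops, since this is exactly what licenses the reduction of infinite words to their finite \halt-prefixes and thereby lets \cref{cl:undecidabilityContainment} do the rest of the work. (Note that no analogous argument is available for strict containment, as the excerpt already explains: a \halt-free word incorrectly describing a halting computation gives $\B(w)=\A(w)$ rather than $\B(w)<\A(w)$, so the reduction only constrains $\B(w)\le\A(w)$.)
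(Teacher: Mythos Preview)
Your proof is correct and follows essentially the same route as the paper: reuse the construction of \cref{sec:TheReduction} on infinite words, reduce words containing \halt to their finite \halt-prefix via the absorbing $0$-weighted states, and handle \halt-free words by exhibiting a run of $\B$ with value $1=\A(w)$. The only cosmetic difference is that the paper discards the Halt-Checker gadget and invokes the command checker for the \halt-free case, whereas you keep all gadgets and use the Halt-Checker run that stays in $\qhc$; both choices work, and your absorbing-state bookkeeping is exactly the justification the paper leaves implicit when it says the values ``only relate to the prefix $u$''.
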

\begin{proof}
	The automata $\A$ and $\B$ in the reduction given in \cref{sec:TheReduction} can operate as is on infinite words, ignoring the Halt-Checker gadget of $\B$ which is only relevant to finite words.

	Since the values of both $\A$ and $\B$ on an input word $w$ only relate to the prefix $u=\pref_{\halt(w)}$ of $w$ until the first \halt command, we still have that $\B(w)>\A(w)$ if $u$ correctly describes a halting run of the two-counter machine $\M$ and that $\B(w)<\A(w)$ if $u$ is finite and does not correctly describe a halting run of $\M$.
	
	Yet, for infinite words there is also the possibility that the word $w$ does not contain the \halt command. In this case, the value of both $\A$ and the command checker of $\B$ will converge to $1$, getting $\A(w)=\B(w)$.
	
	Hence, if $\M$ 0-halts, there is a word $w$, such that $\B(w)>\A(w)$ and otherwise, for all words $w$, we have $\B(w)\leq \A(w)$. 
\end{proof}

Observe that for NMDAs, equivalence and non-strict containment are interreducible.

\begin{theorem}\label{cl:EquivalenceFiniteWordsUndecidable}
	Equivalence of (integral) NMDAs on finite as well as infinite words is undecidable. That is, the problem of deciding for given integral NMDAs $\A$ and $\B$ on finite or infinite words whether $\A(w) = \B(w)$ for all words $w$.
\end{theorem}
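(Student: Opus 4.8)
The plan is to reduce the equivalence problem to the non-strict containment problem of \cref{cl:NonStrictContainmentInfiniteWordsUndecidable} (and its finite-words analogue, \cref{cl:ContainmentFiniteWordsUndecidable}), in both directions, so that undecidability of equivalence follows immediately. The two directions are: (i) if we could decide equivalence of NMDAs, then we could decide non-strict containment; and (ii) conversely, non-strict containment implies equivalence is decidable — though for the undecidability conclusion we only need direction (i), so I would focus on showing that non-strict containment reduces to equivalence.

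First I would recall that, for any two NMDAs $\M_1$ and $\M_2$ over the same alphabet, one can build an NMDA $\M_1 \vee \M_2$ (on finite or infinite words) computing the pointwise minimum $(\M_1 \vee \M_2)(w) = \min\{\M_1(w), \M_2(w)\}$, simply by taking the disjoint union of the two automata with a single combined initial-state set (this is the standard union construction for NMDAs, and it keeps all discount factors integral if the originals are). Then the key observation is the identity
\begin{align*}
	\M_1(w) \leq \M_2(w) \text{ for all } w \quad\Longleftrightarrow\quad (\M_1 \vee \M_2)(w) = \M_1(w) \text{ for all } w.
\end{align*}
Indeed, $\min\{\M_1(w),\M_2(w)\} = \M_1(w)$ for all $w$ exactly when $\M_1(w) \le \M_2(w)$ for all $w$. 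Applying this to the instance produced by the reduction of \cref{sec:TheReduction}, with $\M_1 = \B$ (the NMDA) and $\M_2 = \A$ (the DMDA): $\B \vee \A$ is again an integral NMDA, $\B$ is an integral NMDA, and by \cref{cl:ContainmentFiniteWordsUndecidable}/\cref{cl:NonStrictContainmentInfiniteWordsUndecidable} it is undecidable whether $\B(w) \le \A(w)$ for all $w$, hence undecidable whether $\B \vee \A$ and $\B$ are equivalent. This gives undecidability of equivalence on both finite and infinite words.

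For completeness I would also note the converse reduction — non-strict containment reduces to equivalence in the same way, via $\M_1(w)\le\M_2(w)$ for all $w$ iff $\M_1\vee\M_2 \equiv \M_1$ — which together with the obvious fact that equivalence checking on NMDAs and on DMDAs behaves the same way under these constructions shows the two problems are genuinely interreducible, as claimed in the sentence preceding the theorem. The only point requiring any care is verifying that the union construction is correct for discounted-sum semantics with multiple discount factors and preserves integrality; since the value of an NMDA on a word is by definition the infimum over its runs, and the runs of the union are exactly the runs of the two components, $(\M_1\vee\M_2)(w)=\inf\{\M_1(w),\M_2(w)\}=\min\{\M_1(w),\M_2(w)\}$ is immediate, so there is essentially no obstacle here — the content is entirely in the earlier undecidability results, and this theorem is a short corollary.
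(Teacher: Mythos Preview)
Your proposal is correct and takes essentially the same approach as the paper: reduce non-strict containment to equivalence by forming the disjoint-union automaton $\C$ computing the pointwise minimum and observing that $\C\equiv\M_1$ iff $\M_1(w)\le\M_2(w)$ for all $w$. The paper's proof is slightly terser and phrases it as a contradiction argument, but the reduction and the key identity are identical to yours.
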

\begin{proof}
	Assume toward contradiction the existence of a procedure for equivalence check of $\A$ and $\B$. 
	We can use the nondeterminism to obtain an automaton $\C=\A \cup \B$, having $C(w)\leq A(w)$ for all words $w$. We can then check whether $\C$ is equivalent to $\A$, which holds if and only if $\A(w) \leq \B(w)$ for all words $w$. Indeed, if $\A(w) \leq \B(w)$ then $\A(w) \leq \min(\A(w), \B(w)) = \C(w)$, while if there exists a word $w$, such that $\B(w)<\A(w)$, we have $\C(w) = \min(\A(w), \B(w))  <\A(w)$, implying that $\C$ and $\A$ are not equivalent. Thus, such a procedure contradicts the undecidability of non-strict containment, shown in \cref{cl:ContainmentFiniteWordsUndecidable,cl:NonStrictContainmentInfiniteWordsUndecidable}.
\end{proof}

\section{Comparison of NDAs with Different Discount Factors}\label{sec:TwoNdas}

We present below our algorithm for the comparison of NDAs with different discount factors.
We start with automata on infinite words, and then show how to solve the case of finite words by reduction to the case of infinite words.

The algorithm is based on our main observation that, due to the difference between the discount factors, we only need to consider the combination of the automata computation trees up to some level $k$, after which we can consider first the best/worst continuation of the automaton with the smaller discount factor, and on top of it the worst/best continuation of the second automaton.

For an NDA $\A$, we define its \emph{lowest} (resp.\ \emph{highest}) \emph{infinite run value} by 
		$\lowest(\A)$ (resp.\ $\highest(\A)$) = $\min$ (resp.\ $\max$) $\{\A(r) \ST r$ is an infinite run of  $\A~(\text{on some word }w \in \Sigma^\omega)\}$.
		
Observe that we can use $\min$ and $\max$ (rather than $\inf$ and $\sup$) since the infimum and supremum values are indeed attainable by specific infinite runs of the NDA (cf.\ \cite[Proof of Theorem 9]{BL21}).
Notice that $\lowest(\A)$ and $\highest(\A)$ can be calculated in PTIME by a simple reduction to one-player discounted-payoff games \cite{Andersson06}.  

Considering word values, we also refer to the \emph{lowest} (resp.\ \emph{highest}) \emph{word value} of $\A$, defined by $\lowestw(\A)$ (resp.\ $\highestw(\A)$)= $\min$ (resp.\ $\max$) $\set{\A(w) \ST w \in \Sigma^\omega}$.
Observe that $\lowestw(\A)=\lowest(\A)$, $\highestw(\A)\leq \highest(\A)$, and for deterministic automaton, $\highestw(\A)=\highest(\A)$.

For an NMDA $\A$ with states $Q$, we define the \emph{maximal difference between suffix runs} of $\A$ as 
$
\maxdiff{\A}=\max\set{\highest(\A^q)-\lowest(\A^q)\ST {q\in Q} }
$. 
Notice that $\maxdiff{\A}\geq 0$ and that $\A^q(w)$ is bounded as follows.
\begin{equation}\label{eq:ValuesRange}
	\lowest(\A^q) \leq \A^q(w) \leq \lowest(\A^q) + \maxdiff{\A}
\end{equation}

\begin{lemma}\label{cl:NDAsContainmentInfinite}
	There is an algorithm that computes for every input discount factors $\lambda_A,\lambda_D\in\Rat\cap (1,\infty)$, $\lambda_A$-NDA $\A$ and $\lambda_D$-DDA $\D$ on infinite words the value of $\min\{\A(w)-\D(w) \ST w \in \Sigma^\omega\}$.
\end{lemma}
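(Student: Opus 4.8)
The plan is to make precise the informal idea sketched in the introduction: unfold the joint computation of $\A$ and $\D$ up to a finite level $k$, and beyond that level decouple the two optimizations. Without loss of generality I would first handle the case $\lambda_A < \lambda_D$ (if $\lambda_A = \lambda_D$ this is the classical same-discount-factor comparison, which is already known to be solvable, e.g.\ via \cite{BH14,BH21}; if $\lambda_A > \lambda_D$ the roles of "optimize first" are swapped symmetrically, working with $\max$ instead of $\min$ on the $\D$ side). Fix $k$ large enough — the threshold will be computable from $\lambda_A,\lambda_D$, the number of states of $\A$, and the largest absolute weight — so that after reading any $k$ letters, the accumulated discount factor $\lambda_D^k$ of $\D$ already dominates: concretely, one non-optimal choice of $\A$ at level $k$ loses at least $\frac{1}{\lambda_A^{k+1}} \cdot (\text{min weight gap})$ in $\A$'s value, whereas the entire remaining contribution of $\D$ from level $k$ onward is bounded in magnitude by $\frac{1}{\lambda_D^k}\cdot\frac{W}{1-1/\lambda_D}$; choosing $k$ so the former exceeds twice the latter guarantees that an optimal run of $\A$ on a suffix word cannot be improved by deviating to help $\D$.

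The key steps, in order: (1) Quantify the above: show there is a computable $k$ such that for any word $u$ of length $k$ and any continuation $w'$, the run of $\A$ on $uw'$ that is optimal for minimizing $\A(uw')$ agrees on a prefix-optimal choice, and more importantly that $\min_{w'}(\A(uw') - \D(uw'))$ over continuations can be computed by first fixing, for each state $q$ of $\A$ reachable after $u$, the value $\lowest(\A^q)$ (the best suffix value of $\A$ from $q$, independent of $\D$), and then, among all suffix words/runs achieving that value, choosing the one that maximizes $\D$'s suffix value. (2) Formalize "among $\A$-optimal suffix runs, maximize $\D$": build the product of (a relevant sub-automaton of) $\A$ restricted to its optimal-value runs with the deterministic $\D$, and compute a $\max$-value on this product; this is again a one-player discounted-payoff computation, doable in PTIME by the reduction to discounted-payoff games cited for $\lowest$/$\highest$ (\cite{Andersson06}), once one has the finitely many relevant "$\A$-optimal from $q$" objects. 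The subtlety here is that "$\A$-optimal suffix runs from $q$" need not form a nice sub-automaton directly, but using $\val{\A^q}{} = \lowest(\A^q)$ one can keep a bounded amount of information (the accumulated "deficit" of $\A$ so far, which must stay $0$ along an optimal run) — or, more robustly, work with $\lowest(\A^q)$ as an additive constant and note that a run is $\A$-optimal iff at every step it takes a transition that is on-the-nose optimal, which is a reachability/safety condition definable on a product with the vector of $\lowest$ values. (3) Combine: enumerate all states $p$ of the joint computation tree at level $k$ (finitely many, each a pair $(q_p, s_p)$ of an $\A$-state and the $\D$-state, together with the finite accumulated prefix difference $d_p = \A(r_p) - \D(r_p)$ for the prefix run $r_p$ reaching $p$, and the accumulated discount factors $\lambda_A^k, \lambda_D^k$); for each such $p$ the suffix contribution to $\A(w)-\D(w)$ is $\frac{1}{\lambda_A^k}\lowest(\A^{q_p}) - \frac{1}{\lambda_D^k}\cdot(\text{max }\D\text{-suffix value over }\A\text{-optimal suffix words from } q_p \text{ synced with } s_p)$; take the minimum of $d_p$ plus this quantity over all leaves $p$. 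Return that minimum.

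I expect the main obstacle to be step (1): pinning down the exact computable value of $k$ and proving rigorously that beyond level $k$ the decoupling is lossless — i.e.\ that it truly suffices to let $\A$ commit to a suffix that is optimal \emph{for $\A$ alone} and only then optimize $\D$, rather than trading off. The danger is the reverse interaction: could a suffix that is \emph{slightly} suboptimal for $\A$ but \emph{much} better (larger) for $\D$ decrease $\A(w)-\D(w)$ further? The point is precisely that "slightly suboptimal for $\A$" at level $\geq k$ still costs at least $\frac{1}{\lambda_A^{k+1}}\cdot g$ where $g>0$ is the minimal nonzero gap between an optimal and a next-best transition value-to-go (this $g$ is itself bounded below by a computable quantity, since all the relevant values $\lowest(\A^q)$ are rationals with bounded denominators), while any possible gain on the $\D$ side from level $k$ on is at most $\frac{1}{\lambda_D^k}\cdot\maxdiff{\D}$, and $\lambda_D^k / \lambda_A^{k} \to \infty$. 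Making "next-best gap" precise when $\A$ has runs whose value-to-go varies continuously — it does not; from a fixed state the set of achievable suffix-run values is what matters and the relevant comparison is discrete per-transition — and handling the case $g = 0$ (several transitions equally optimal, in which case no cost is incurred and we simply keep all of them in the product for $\D$'s optimization) is the delicate bookkeeping. Once $k$ and the product construction are set up, correctness is a finite case analysis and the complexity claim (PSPACE with the stated encoding) follows because $k$ is polynomial in the unary-encoded discount factors and the product at level $k$ can be explored on the fly.
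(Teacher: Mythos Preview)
Your proposal is correct and follows essentially the same approach as the paper: define the minimal nonzero penalty $m_\A$ for leaving an $\A$-optimal (``preferred'') transition, choose $k$ so that $\frac{\maxdiff{\D}}{\lambda_D^k} < \frac{m_\A}{\lambda_A^k}$, and beyond level $k$ restrict $\A$ to its preferred-transition sub-automaton and maximize $\D$ over the product---exactly the construction the paper carries out (with the symmetric variant when $\lambda_A>\lambda_D$). Your anticipated obstacle and its resolution (the per-transition gap $g$ is discrete and strictly positive, and equally-optimal transitions are simply all kept in the product) match the paper's \cref{cl:NDAsContainmentInfiniteCorrectness} and the definition of $\preferred$.
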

\begin{proof}
	Consider an alphabet $\Sigma$, discount factors $\lambda_A,\lambda_D\in\Rat\cap (1,\infty)$, a $\lambda_A$-NDA $\A=\tuple{\Sigma, Q_\A, \iota_\A, \delta_\A, \gamma_\A}$ and a $\lambda_D$-DDA $\D=\tuple{\Sigma, Q_\D, \iota_\D, \delta_\D, \gamma_\D}$.
	When $\lambda_A=\lambda_D$, we can generate a $\lambda_A$-NDA $\C\equiv \A-\D$ over the product of $\A$ and $\D$ and compute $\lowestw(\C)$.
	
	When $\lambda_A\neq\lambda_D$, {\bf we consider first the case that $\lambda_A<\lambda_D$}.
	
	Our algorithm unfolds the computation trees of $\A$ and $\D$, up to a level in which only the minimal-valued suffix words of $\A$ remain relevant --	
	Due to the massive difference between the accumulated discount factor in $\A$ compared to the one in $\D$, any ``penalty'' of not continuing with a minimal-valued suffix word in $\A$, defined below as $m_\A$, cannot be compensated even by the maximal-valued word of $\D$, which ``profit'' is at most as high as $\maxdiff{\D}$.
	Hence, at that level, it is enough to look among the minimal-valued suffixes of $\A$ for the one that implies the highest value in $\D$.
	
	For every transition $t=(q,\sigma,q')\in\delta_\A$, let $\minval(q,\sigma,q')=\gamma_\A(q,\sigma,q') + \frac{1}{\lambda_A}\cdot\lowestw(\A^{q'})$ be the best (minimal) value that $\A^q$ can get by taking $t$ as the first transition.
	We say that $t$ is \emph{preferred} if it starts a minimal-valued infinite run of $\A^q$, namely 
	$\preferred = \set{t=(q,\sigma,q')\in\delta_\A \ST \minval(t)=\lowestw(\A^q)}$ is the set of preferred transitions of $\A$.
	Observe that an infinite run of $\A^q$ that takes only transitions from $\preferred$, has a value equal to $\lowest(\A^q)$ (cf.\ \cite[Proof of Theorem 9]{BL21}).
	
	If all the transitions of $\A$ are preferred, $\A$ has the same value on all words, and then $\min\{\A(w)-\D(w) \ST w \in \Sigma^\omega\}=\lowest(\A)-\highestw(\D)$. (Recall that since $\D$ is deterministic, we can easily compute $\highestw(\D)$.)
	Otherwise, let $m_\A$ be the minimal penalty for not taking a preferred transition in $\A$, meaning\
		$
		m_\A=\min\Big\{
		\minval(t')- \minval(t'') 
		\ \Big|\ 
		\begin{matrix}
			t'=(q,\sigma',q')\in\delta_\A\setminus\preferred,\\ t''=(q,\sigma'',q'')\in \preferred
		\end{matrix} \Big\}
		$.
		Observe that $m_\A > 0$.
		
		Considering the connection between $m_\A$ and $\maxdiff{\D}$, notice first that if $\maxdiff{\D}=0$, $\D$ has the same value on all words, and then we have $\min\{\A(w)-\D(w) \ST w \in \Sigma^\omega\}=\lowest(\A)-\lowest(\D)$.
		Otherwise, meaning $\maxdiff{\D} > 0$, we unfold the computation trees of $\A$ and $\D$ for the first $k$ levels, until the maximal difference between suffix runs in $\D$, divided by the accumulated discount factor of $\D$, is smaller than the minimal penalty for not taking a preferred transition in $\A$, divided by the accumulated discount factor of $\A$. Meaning, $k$ is the minimal integer such that
		\begin{equation}\label{eq:UnfoldingLevel}
		\frac{\maxdiff{\D}}{{\lambda_D}^k} < \frac{m_\A}{{\lambda_A}^k}
		\end{equation}
		Starting at level $k$, the penalty gained by taking a non-preferred transition of $\A$ cannot be compensated by a higher-valued word of $\D$.
		
		At level $k$, we consider separately every run $\psi$ of $\A$ on some prefix word $u$.
		We should look for a suffix word $w$, that minimizes 
		\begin{equation}\label{eq:Suffix}
			\A(uw)-\D(uw)=\A(\psi) + \frac{1}{{\lambda_A}^k} \cdot \A^{\delta_\A(\psi)}(w) - \D(u) - \frac{1}{{\lambda_D}^k} \cdot \D^{\delta_\D(u)}(w)
		\end{equation}
		
		A central point of the algorithm is that every word that minimizes $\A-\D$ must take only preferred transitions of $\A$ starting at level $k$ (\iftoggle{completeProofs}{see \cref{cl:NDAsContainmentInfiniteCorrectness}}{full proof in \cite{BH23}}).
		As all possible remaining continuations after level $k$ yield the same value in $\A$, we can choose among them the continuation that yields the highest value in $\D$.
		
		Let $\B$ be the partial automaton with the states of $\A$, but only its preferred transitions $\preferred$. (We ignore words on which $\B$ has no runs.)
		We shall use the automata product $\B^{\delta_\A(\psi)} \times \D^{\delta_\D(u)}$ to force suffix words that only take preferred transitions of $\A$, while calculating among them the highest value in $\D$.  
		
		Let $\C^{(\delta_\A(\psi),\delta_\D(u))}=\tuple{\Sigma, Q_\A \times Q_\D, \set{(\delta_\A(\psi), \delta_\D(u))}, \preferred \times \delta_\D, \gamma_\C}$ be the partial $\lambda_D$-NDA that is generated by the product of $\B^{\delta_\A(\psi)}$ and $\D^{\delta_\D(u)}$, while only considering the weights (and discount factor) of $\D$, meaning $\gamma_\C((q,p),\sigma,(q',p'))=\gamma_\D(p,\sigma,p')$.
		
		A word $w$ has a run in $\A^{\delta_\A(\psi)}$ that uses only preferred transitions iff $w$ has a run in $\C^{(\delta_\A(\psi),\delta_\D(u))}$.
		Also, observe that the nondeterminism in $\C$ is only related to the nondeterminism in $\A$, and the weight function of $\C$ only depends on the weights of $\D$, hence all the runs of $\C^{(\delta_\A(\psi),\delta_\D(u))}$ on the same word result in the same value, which is the value of that word in $\D$.
		Combining both observations, we get that a word $w$ has a run in $\A^{\delta_\A(\psi)}$ that uses only preferred transitions iff $w$ has a run $r$ in $\C^{(\delta_\A(\psi),\delta_\D(u))}$ such that $\C^{(\delta_\A(\psi),\delta_\D(u))}(r)=\D^{\delta_\D(u)}(w)$. 
		Hence, after taking the $k$-sized run $\psi$ of $\A$, and under the notations defined in \cref{eq:Suffix}, a suffix word $w$ that can take only preferred transitions of $\A$, and maximizes $\D^{\delta_\D(u)}(w)$, has a value of $\D^{\delta_\D(u)}(w)=\highest(\C^{(\delta_\A(\psi),\delta_\D(u))})$.
		This leads to
		\begin{align*}
			&\min\set{\A(v)-\D(v)\ST v\in\Sigma^\omega} =\\
			&\min\Big\{\A(\psi) + \frac{\A^{\delta_\A(\psi)}(w)}{{\lambda_A}^k} - \D(u) - \frac{\D^{\delta_\D(u)}(w)}{{\lambda_D}^k}\Big| \begin{matrix} 
			u\in\Sigma^k, w\in\Sigma^\omega, \\ 
			\psi \text{ is a run of } \A \text{ on } u
			\end{matrix} \Big\} =\\
			&\min_{\psi}\Bigg\{
			\A(\psi) + \frac{\lowest(\A^{\delta_\A(\psi)})}{{\lambda_A}^k} - \D(u) - \frac{\highest(\C^{(\delta_\A(\psi),\delta_\D(u))})}{{\lambda_D}^k}\Big|
			\begin{matrix} 
			u\in\Sigma^k, \\ \psi \text{ is a run}\\ 
			\text{of } \A \text{ on } u
			\end{matrix} \Bigg\}
		\end{align*}
		and it is only left to calculate this value for every $k$-sized run of $\A$, meaning for every leaf in the computation tree of $\A$. 

	\noindent {\bf The case of $\lambda_A>\lambda_D$} is analogous, with the following changes: 
	\begin{itemize}
		\item
		For every transition of $\D$, we compute $\maxval(p,\sigma,p')=\gamma_\D(p,\sigma,p') + \frac{1}{\lambda_D}\cdot\highestw(\D^{p'})$, instead of $\minval(q,\sigma,q')$.
		\item 
		The preferred transitions of $\D$ are the ones that start a maximal-valued infinite run, that is
		$\preferred = \set{t=(p,\sigma',p')\in\delta_\D \ST \maxval(t)=\highest(\D^p)}$, and the minimal penalty $m_\D$ is\\
		$
		m_\D=\min\Big\{
		\maxval(t'') - \maxval(t') 
		\ \Big|\ 
		\begin{matrix}
			t''=(p,\sigma'',p'')\in \preferred, \\
			t'=(p,\sigma',p')\in\delta_\D\setminus\preferred
		\end{matrix} \Big\}
		$
		
		\item 
		$k$ should be the minimal integer such that
		$
		\frac{\maxdiff{\A}}{{\lambda_A}^k} < \frac{m_\D}{{\lambda_D}^k}
		$.
		\item
		We define $\B$ to be the restriction of $\D$ to its preferred transitions, and $\C^{(\delta_\A(\psi),\delta_\D(u))}$ as a partial $\lambda_A$-NDA on the product of $\A^{\delta_\A(\psi)}$ and $\B^{\delta_\D(u)}$ while considering the weights of $\A$.
		We then calculate $\lowest(\C^{(\delta_\A(\psi),\delta_\D(u))})$ for every $k$-sized run of $\A$, $\psi$, and conclude that $\min\set{\A-\D}$ is equal to
		$
			\min_{\psi}\set{
				\A(\psi) + \frac{\lowest(\C^{(\delta_\A(\psi),\delta_\D(u))})}{{\lambda_A}^k} - \D(u) - \frac{\highest(\D_{\delta_\D(u)})}{{\lambda_D}^k}}
		$.
		
		Observe that in this case, it might not hold that all runs of $\C^{(\delta_\A(\psi),\delta_\D(u))}$ on the same word have the same value, but such property is not required, since we look for the minimal run value (which is the minimal word value).
	\end{itemize}
	
\end{proof}

Notice that the algorithm of \cref{cl:NDAsContainmentInfinite} does not work if switching the direction of containment, namely if considering a deterministic $\A$ and a nondeterministic $\D$. 
The determinism of $\D$ is required for finding the maximal value of a valid word in $\B^{\delta_\A(\psi)}\times \D^{\delta_\D(u)}$. If $\D$ is not deterministic, the maximal-valued run of $\B^{\delta_\A(\psi)}\times \D^{\delta_\D(u)}$ on some word $w$ equals the value of some run of $\D$ on $w$, but not necessarily the value of $\D$ on $w$.
We also need $\D$ to be deterministic for computing $\highestw(\D^{p})$ in the case that $\lambda_A>\lambda_D$.

\iftoggle{completeProofs}{
To show the correctness of \cref{cl:NDAsContainmentInfinite}, we present the following claim.
\begin{lemma}\label{cl:NDAsContainmentInfiniteCorrectness}
	For every input discount factors $\lambda_A,\lambda_D\in\Rat\cap (1,\infty)$ such that $\lambda_A<\lambda_D$, $\lambda_A$-NDA $\A$ and $\lambda_D$-DDA $\D$, every infinite word $w$ that minimizes $\A(w)-\D(w)$ must take a preferred transition of $\A$ at every level $k$ for which $\frac{\maxdiff{\D}}{{\lambda_D}^k} < \frac{m_\A}{{\lambda_A}^k}$.
\end{lemma}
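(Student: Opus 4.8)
The plan is a perturbation argument by contradiction. Fix a word $w$ that minimizes $\A(w)-\D(w)$ and a level $k$ with $\frac{\maxdiff{\D}}{{\lambda_D}^{k}}<\frac{m_\A}{{\lambda_A}^{k}}$, and let $r$ be any optimal run of $\A$ on $w$ (so $\A(r)=\A(w)$; such a run exists because the infimum defining $\A(w)$ is attained, cf.\ \cite{BL21}). Assume toward a contradiction that the transition $r[k]$ is not preferred. Write $w=u\con w'$ with $|u|=k$, let $\psi$ be the prefix of $r$ consisting of its first $k$ transitions, $q=\delta_\A(\psi)$ and $p=\delta_\D(u)$, so that $r[k]=(q,w[k],q')$ for some $q'$. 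Since $r$ is optimal, splitting its value at level $k$ forces $\A(w)=\A(\psi)+\frac{1}{{\lambda_A}^{k}}\A^{q}(w')$ (any suffix run being a run of $\A^q$ on $w'$), while determinism of $\D$ gives $\D(w)=\D(u)+\frac{1}{{\lambda_D}^{k}}\D^{p}(w')$.

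The first step is to lower-bound the ``penalty'' that $\A$ pays for using $r[k]\notin\preferred$. Since the optimal run of $\A^{q}$ on $w'$ starts with $r[k]$, we get $\A^{q}(w')\ge\minval(r[k])$; and because $q$ necessarily also emits a preferred transition (the attained minimal-valued infinite run from $q$ begins with one), the definitions of $\preferred$ and $m_\A$ yield $\minval(r[k])\ge\lowestw(\A^{q})+m_\A=\lowest(\A^{q})+m_\A$. The second step is to exhibit a strictly better competitor word: pick $w_{\min}$ to be a word carrying an infinite run of $\A^{q}$ that uses only preferred transitions, so that $\A^{q}(w_{\min})=\lowest(\A^{q})$ (cf.\ \cite{BL21}), and set $v=u\con w_{\min}$. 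Prepending $\psi$ to that preferred-only run is a run of $\A$ on $v$, hence $\A(v)\le\A(\psi)+\frac{1}{{\lambda_A}^{k}}\lowest(\A^{q})$, and determinism gives $\D(v)=\D(u)+\frac{1}{{\lambda_D}^{k}}\D^{p}(w_{\min})$.

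Subtracting the two evaluations, the $\A(\psi)$, $\D(u)$ and $\frac{1}{{\lambda_A}^{k}}\lowest(\A^{q})$ terms cancel, leaving
\[
\big(\A(w)-\D(w)\big)-\big(\A(v)-\D(v)\big)\ \ge\ \frac{m_\A}{{\lambda_A}^{k}}+\frac{1}{{\lambda_D}^{k}}\big(\D^{p}(w_{\min})-\D^{p}(w')\big).
\]
By \cref{eq:ValuesRange}, $\D^{p}(w_{\min})-\D^{p}(w')\ge\lowest(\D^{p})-\big(\lowest(\D^{p})+\maxdiff{\D}\big)=-\maxdiff{\D}$, so the right-hand side is at least $\frac{m_\A}{{\lambda_A}^{k}}-\frac{\maxdiff{\D}}{{\lambda_D}^{k}}$, which is strictly positive by the assumption on $k$ (compare \cref{eq:UnfoldingLevel}). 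Thus $\A(v)-\D(v)<\A(w)-\D(w)$, contradicting minimality of $w$; so $r[k]$ is preferred. Since $\lambda_A<\lambda_D$, the inequality on $k$ persists for all larger levels, so $r$ in fact takes only preferred transitions from level $k$ onwards.

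The routine ingredients are the single-discount-factor split of a run value at a fixed level, the attainability of $\lowest(\cdot)$ by a preferred-only run imported from \cite{BL21}, and the elementary bound \cref{eq:ValuesRange}. The delicate point — and the conceptual core of the whole comparison algorithm — is the calibration of the two scales at level $k$: the one-step penalty $m_\A/{\lambda_A}^{k}$ that $\A$ incurs for deviating from a preferred transition must strictly outweigh the largest swing $\maxdiff{\D}/{\lambda_D}^{k}$ that $\D$ can still extract over the common suffix. This is exactly what $\lambda_A<\lambda_D$ together with the defining inequality for $k$ secure, and it is what makes ``preferred transitions only'' an unavoidable feature of every optimal run of $\A$ past level $k$.
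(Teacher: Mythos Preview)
Your argument is correct and follows essentially the same route as the paper's own proof: assume a minimizing word whose optimal $\A$-run uses a non-preferred transition at a level satisfying the threshold inequality, replace the suffix by a preferred-only continuation, and compare the gain $m_\A/\lambda_A^{k}$ against the maximal loss $\maxdiff{\D}/\lambda_D^{k}$ on the $\D$ side to reach a contradiction. Your exposition is in fact slightly more careful than the paper's in two places: you spell out why the suffix of the optimal run is itself optimal for $\A^{q}$ on $w'$ (so that $\A^{q}(w')\ge\minval(r[k])$ is justified), and you write the bound on $\A(v)$ as an inequality rather than an equality, which is the correct direction since a different length-$k$ prefix run could in principle do better.
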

\begin{proof}
	Consider discount factors $\lambda_A,\lambda_D\in\Rat\cap (1,\infty)$ such that $\lambda_A<\lambda_D$, $\lambda_A$-NDA $\A$, $\lambda_D$-DDA $\D$, and $k$ the minimal integer such that 
	$$\frac{\maxdiff{\D}}{{\lambda_D}^k} < \frac{m_\A}{{\lambda_A}^k}$$
	
	Assume toward contradiction the existence of a word $v$ that minimizes $\A-\D$, while a minimal-valued run $\psi_\A$ of $\A$ on $v$ does not take a preferred transition at some level $n\geq k$.
	Let $u$ be the $n$-sized prefix of $v$, 
	$w$ the corresponding suffix (meaning $v=u\con w$), 
	$\psi$ the prefix run of $\psi_\A$ on $u$, 
	and $w'$ some minimal-valued word of $\A^{\delta_\A(\psi)}$.
	The first transition taken by $\psi_\A$ when continuing with $w$ is not preferred, meaning \begin{equation}\label{eq:NotPreferredSuffix}
		\A^{\delta_\A(\psi)}(w) \geq \lowestw(\A^{\delta_\A(\psi)})+m_\A = \A^{\delta_\A(\psi)}(w') + m_\A
	\end{equation}
	Hence,
	\begin{align*}
		\A(v)-\D(v) &\overset{\mathrm{^{\eqref{eq:Suffix}}}}{=} 
		\A(\psi) + \frac{\A^{\delta_\A(\psi)}(w)}{{\lambda_A}^n}  - \D(u) - \frac{\D^{\delta_\D(u)}(w)}{{\lambda_D}^n} \\
		&\overset{\mathrm{^{\eqref{eq:NotPreferredSuffix},\eqref{eq:ValuesRange}}}}{\geq} \A(\psi) + \frac{\A^{\delta_\A(\psi)}(w')+m_\A}{{\lambda_A}^n}  - \D(u) - \frac{\lowest(\D^{\delta_\D(u)}) + \maxdiff{\D}}{{\lambda_D}^n} \\
		& \overset{\mathrm{^{\eqref{eq:UnfoldingLevel}}}}{>}
		\A(\psi) + \frac{\A^{\delta_\A(\psi)}(w')}{{\lambda_A}^n}  - \D(u) - \frac{\lowest(\D^{\delta_\D(u)})}{{\lambda_D}^n} \\
		&\overset{\mathrm{^{\eqref{eq:ValuesRange}}}}{\geq}
		\A(\psi) + \frac{\A^{\delta_\A(\psi)}(w')}{{\lambda_A}^n}  - \D(u) - \frac{\D^{\delta_\D(u)}(w')}{{\lambda_D}^n} \\
		&\overset{\mathrm{^{\eqref{eq:Suffix}}}}{=}\A(u\con w') - \D(u\con w')
	\end{align*}
	leading to a contradiction.
\end{proof}
}

Moving to automata on finite words, we reduce the problem to the corresponding problem handled in \cref{cl:NDAsContainmentInfinite}, by adding to the alphabet a new letter that represents the end of the word, and making some required adjustments.

\begin{lemma}\label{cl:NDAsContainmentFinite}
	There is an algorithm that computes for every input discount factors $\lambda_A,\lambda_D\in\Rat\cap (1,\infty)$, $\lambda_A$-NDA $\A$ and $\lambda_D$-DDA $\D$ on finite words the value of $\inf\set{\A(u)-\D(u) \ST u \in \Sigma^+}$, and determines if there exists a finite word $u$ for which $\A(u)-\D(u)$ equals that value.
\end{lemma}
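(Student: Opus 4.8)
The plan is to reduce this problem to the one solved in \cref{cl:NDAsContainmentInfinite} by appending an infinite tail of a fresh \emph{end-of-word} letter $\#$ to each finite word. Assume w.l.o.g.\ that $\A$ and $\D$ are complete, and note that all the values $\A(u)-\D(u)$ for $u\in\Sigma^+$ lie in a computable interval $[-B,B]$. Let $\Sigma'=\Sigma\cup\{\#\}$, and build a $\lambda_A$-NDA $\A'$ and a $\lambda_D$-DDA $\D'$ over $\Sigma'$ as follows. Each of them carries a fresh copy of its state space, used only for the first read letter: from a copy-state the automaton behaves like $\A$ (resp.\ $\D$) but moves to the corresponding genuine state, and on $\#$ it moves to a dedicated sink via a transition with weight $0$ in $\D'$ and weight $W>B$ in $\A'$ (so that words starting with $\#$, which correspond to the empty word, can never be optimal, having difference $W>B\ge I$). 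From a genuine state, $\A'$ and $\D'$ behave exactly like $\A$ and $\D$ on letters of $\Sigma$, and on $\#$ they move to a fresh terminal sink on which every letter of $\Sigma'$ is a $0$-weight self-loop. Then for every $u\in\Sigma^+$ we have $\A'(u\#^\omega)=\A(u)$ and $\D'(u\#^\omega)=\D(u)$, hence $\A'(u\#^\omega)-\D'(u\#^\omega)=\A(u)-\D(u)$, while for every $w\in\Sigma^\omega$ we have $\A'(w)=\A(w)$ and $\D'(w)=\D(w)$ (the infinite-word values). Note that $\D'$ is indeed deterministic and that the discount factors are unchanged, so \cref{cl:NDAsContainmentInfinite} applies to $\A',\D'$.

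Next I would show that the value $M:=\min\{\A'(w')-\D'(w')\ST w'\in(\Sigma')^\omega\}$ computed by \cref{cl:NDAsContainmentInfinite} equals $I:=\inf\{\A(u)-\D(u)\ST u\in\Sigma^+\}$, which is the value the lemma must output. The inequality ``$M\le I$'' is witnessed by the words $u\#^\omega$. For ``$M\ge I$'', an infinite word over $\Sigma'$ either starts with $\#$ (difference $W>B\ge I$), or is of the form $u\#^\omega$ with $u\in\Sigma^+$ (difference $\A(u)-\D(u)\ge I$), or lies in $\Sigma^\omega$; in the last case, picking a minimal-valued run $r$ of $\A$ on $w$ (attainable, as recalled before \cref{cl:NDAsContainmentInfinite}) and the unique run $r'$ of $\D$ on $w$, the difference $\A(w)-\D(w)=\lim_n\big(\A(r[0..n{-}1])-\D(r'[0..n{-}1])\big)$ is a limit of numbers each of which is at least $\A(w[0..n{-}1])-\D(w[0..n{-}1])\ge I$, hence at least $I$.

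It remains to decide whether $I$ is attained by some finite word, equivalently whether some word $u\#^\omega$ realizes $M=I$. This is the delicate point, because $M$ may be realized only by words in $\Sigma^\omega$ while being a mere limit of the finite-word values (a single-letter truncation can change both tails, so no a priori length bound suffices). I would extract this information from the internal structure of the algorithm of \cref{cl:NDAsContainmentInfinite}, which exhibits the minimizing infinite words explicitly --- a bounded prefix run of the ``faster'' automaton followed by a continuation restricted to its preferred transitions and extremal for the other automaton --- and check, by a reachability analysis on the finite product structures that the algorithm already builds, whether such an optimal continuation from some optimal level-$k$ configuration can reach the terminal sink (equivalently, whether a $\#$-transition out of a reachable state lies on an optimal continuation). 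This is exactly the condition for the existence of a finite $u$ with $\A(u)-\D(u)=I$. The main obstacle is precisely this last step: disentangling, within the optimal-continuation analysis, the genuinely finite witnesses of the optimum from the infinite ones --- i.e.\ distinguishing an optimal value that is realized from one that is only approached --- on top of the mismatch between the two accumulations that already made the infinite-word case hard.
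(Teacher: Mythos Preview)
Your proposal is correct and follows essentially the same approach as the paper: reduce finite words to infinite words by adjoining an end-of-word letter and a terminal sink, invoke \cref{cl:NDAsContainmentInfinite} on the resulting automata, and decide attainability by checking whether some optimal continuation (in the sense of the preferred-transition analysis of that lemma) can actually take an end-marker transition into the sink. The only differences are cosmetic: the paper excludes the empty word by assuming initial states have no incoming transitions and forbidding the end-marker from initial states, whereas you use a first-letter copy layer with a large penalty $W$; and the paper spells out the attainability test as an explicit iterative marking of states reachable via preferred transitions from the sink, which is exactly the ``reachability analysis on the finite product structures'' you sketch.
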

\begin{proof}
	Without loss of generality, we assume that initial states of automata have no incoming transitions. (Every automaton can be changed in linear time to an equivalent automaton with this property.)

	We convert, as described below, an NDA $\N$ on finite words to an NDA $\hat{\N}$ on infinite words, such that $\hat{\N}$ intuitively simulates the finite runs of $\N$.
	For an alphabet $\Sigma$, a discount factor $\lambda\in\Rat\cap (1,\infty)$, and a $\lambda$-NDA (DDA) $\N=\tuple{\Sigma, Q_\N, \iota_\N, \delta_\N, \gamma_\N}$ on finite words, we define the $\lambda$-NDA (DDA)
	$\hat{\N} = \tuple{
		\hat{\Sigma}, 
		Q_\N \cup \set{q_\fin}, 
		\iota_\N, 
		\delta_{\hat{\N}}, 
		\gamma_{\hat{\N}}
	}$ on infinite words.
	The new alphabet $\hat{\Sigma}=\Sigma \cup \set{\fin}$ contains a new letter $\fin\notin\Sigma$ that indicates the end of a finite word.
	The new state $q_\fin$ has $0$-valued self loops on every letter in the alphabet, and there are $0$-valued transitions from every non-initial state to $q_\fin$ on the new letter $\fin$.
	Formally, 
	$\delta_{\hat{\N}}= \delta_\N \cup
		\set{(q_\fin,\sigma, q_\fin \ST \sigma \in \hat{\Sigma})} 
		\cup \set{(q,\fin, q_\fin \ST q\in Q_\N\setminus\iota_\N)} 
	$, and \\
	$\gamma_{\hat{\N}}(t) = \begin{cases}
		\gamma_\N(t) & t \in \delta_\N\\
		0 &  \text{otherwise}
	\end{cases}$
	
	\noindent Observe that for every state $q\in Q_\N$, the following hold.
	\begin{enumerate}
		\item 
		For every finite run $r_\N$ of $\N^q$, there is an infinite run $r_{\hat{\N}}$ of $\hat{\N}^q$, such that ${\hat{\N}}^q(r_{\hat{\N}}) = {{\N}}^q(r_{{\N}})$, and $r_{\hat{\N}}$ takes some $\fin$ transitions.
		($r_{\hat{\N}}$ can start as $r_\N$ and then continue with only $\fin$ transitions.)
		
		\item
		For every infinite run $r_{\hat{\N}}$ of $\hat{\N}^q$ that has a $\fin$ transition, there is a finite run $r_\N$ of $\N^q$, such that ${\hat{\N}}^q(r_{\hat{\N}}) = {{\N}}^q(r_{{\N}})$. 
		($r_\N$ can be the longest prefix of $r_{\hat{\N}}$ up to the first $\fin$ transition).
		
		\item
		For every infinite run $r_{\hat{\N}}$ of $\hat{\N}^q$ that has no $\fin$ transition, there is a series of finite runs of $\N^q$, such that the values of the runs in $\N^q$ converge to ${\hat{\N}}^q(r_{\hat{\N}})$. (For example, the series of all prefixes of $r_{\hat{\N}}$).
	\end{enumerate}
	Hence, for every $q\in Q_\N$ we have 
	$
		\inf\set{\N^q(r)\ST r \text{ is a run of }\N^q} =\lowest({\hat{\N}}^q)
	$ and
	$
		\sup\set{\N^q(r)\ST r \text{ is a run of }\N^q} =\highest({\hat{\N}}^q)
	$.
	(For a non-initial state $q$, we also consider the ``run'' of $\N^q$ on the empty word, and define its value to be $0$.)	
	Notice that the infimum (supremum) run value of $\N^q$ is attained by an actual run of $\N^q$ iff there is an infinite run of $\hat{\N}^q$ that gets this value and takes a $\fin$ transition.
	
	For every state $q\in Q_{\hat{\N}}$, we can determine, as follows, whether $\lowest({\hat{\N}}^q)$ is attained by an infinite run taking a $\fin$ transition.  
	We calculate $\lowest({\hat{\N}}^q)$ for all states, and then start a process that iteratively marks the states of $\hat{\N}$, such that at the end, $q\in Q_{\hat{\N}}$ is marked iff $\lowest({\hat{\N}}^q)$ can be achieved by a run with a $\fin$ transition.
	We start with $q_\fin$ as the only marked state.
	In each iteration we further mark every state $q$ from which there exists a preferred transition $t=(q,\sigma, q')\in \preferred$ to some marked state $q'$.
	The process terminates when an iteration has no new states to mark.
	Analogously, we can determine whether $\highest({\hat{\N}}^q)$ is attained by a run that goes to $q_\fin$.
	
	Consider discount factors $\lambda_A,\lambda_D\in\Rat\cap (1,\infty)$, a $\lambda_A$-NDA $\A$ and a $\lambda_D$-DDA $\D$ on finite words.
	When $\lambda_A=\lambda_D$, similarly to \cref{cl:NDAsContainmentInfinite}, the algorithm finds the infimum value of $\C\equiv\A-\D$ using $\hat{\C}$, and determines if an actual finite word attains this value using the process described above.
	
	Otherwise, the algorithm converts $\A$ and $\D$ to $\hat{\A}$ and $\hat{\D}$, and proceeds as in \cref{cl:NDAsContainmentInfinite} over $\hat{\A}$ and $\hat{\D}$.
	According to the above observations, we have that $\inf\set{\A(u)-\D(u)\ST u\in\Sigma^+} = \min\{\hat{\A}(w)-\hat{\D}(w) \ST w \in \Sigma^\omega\}$, and that $\inf\set{\A(u)-\D(u)}$ is attainable iff $\min\{\hat{\A}(w)-\hat{\D}(w)\}$ is attainable by some word that has a $\fin$ transition.
	Hence, whenever computing $\lowest$ or $\highest$, we also perform the process described above, to determine whether this value is attainable by a run that has a $\fin$ transition.
	We determine that $\inf\set{\A(u)-\D(u)}$ is attainable iff exists a leaf of the computation tree that leads to it, for which the relevant values $\lowest$ and $\highest$ are attainable.
\end{proof}

\subsubsection{Complexity analysis}
We show below that the algorithm of \cref{cl:NDAsContainmentInfinite,cl:NDAsContainmentFinite} only needs a polynomial space, with respect to the size of the input automata,  implying a PSPACE algorithm for the corresponding decision problems.
We define the size of an NDA $\N$, denoted by $|\N|$, as the maximum between the number of its transitions, the maximal binary representation of any weight in it, and the maximal unary representation of the discount factor. (Binary representation of the discount factors might cause our algorithm to use an exponential space, in case that the two factors are very close to each other.)
The input NDAs may have rational weights, yet it will be more convenient to consider equivalent NDAs with integral weights that are obtained by multiplying all the weights by their common denominator \cite{BH21}. (Observe that it causes the values of all words to be multiplied by this same ratio, and it keeps the same input size, up to a polynomial change.)

Before proceeding to the complexity analysis, we provide an auxiliary lemma\iftoggle{completeProofs}{}{ (proof appears in \cite{BH23})}.
\begin{lemma}\label{lem:lasso_run}
	For every integers $p>q \in \Nat\setminus\{0\}$, a $\frac{p}{q}$-NDA $\A$ with integral weights, and a lasso run $r=t_0, t_1, \ldots, t_{x{-}1}, (t_{x}, t_{x{+}1}, \ldots, t_{x+y-1})^\omega$ of $\A$, there exists an integer $b$, such that $\A(r) =\frac{b}{p^x(p^y-q^y)}$.
\end{lemma}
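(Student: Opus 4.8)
The plan is to compute the value of the lasso run directly from the definition and put everything over a common denominator. Write the run as $r = t_0, t_1, \ldots, t_{x-1}, (t_x, \ldots, t_{x+y-1})^\omega$ with discount factor $\lambda = p/q$, so that each transition contributes $\gamma(t_i) \cdot \lambda^{-i} = \gamma(t_i) \cdot (q/p)^i$. Since all weights $\gamma(t_i)$ are integers, the value of the finite "stem" $\sum_{i=0}^{x-1} \gamma(t_i) (q/p)^i$ is an integer divided by $p^{x-1}$, hence can be written as $a/p^x$ for some integer $a$.

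Next I would handle the periodic tail. Let $W = \sum_{i=0}^{y-1} \gamma(t_{x+i}) (q/p)^i$ be the (discounted) weight accumulated over one traversal of the cycle, measured from the start of the cycle; this is an integer divided by $p^{y-1}$, so $W = c/p^y$ for some integer $c$. The $k$-th repetition of the cycle starts at position $x + ky$ in the run, so it contributes $(q/p)^{ky} \cdot (q/p)^x \cdot W$ — wait, more carefully, the tail starting at position $x$ contributes $(q/p)^x \sum_{k \geq 0} (q/p)^{ky} W = (q/p)^x W \cdot \frac{1}{1 - (q/p)^y} = (q/p)^x W \cdot \frac{p^y}{p^y - q^y}$. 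Substituting $W = c/p^y$ and simplifying the $(q/p)^x$ factor against the $a/p^x$ from the stem, I expect everything to collapse to a single fraction of the form $b/\big(p^x(p^y - q^y)\big)$ with $b \in \Nat$ (or $\Z$; the statement's "integer $b$" should be read accordingly, and positivity, if needed, follows since the run value and denominator have a determined sign — or one simply allows $b$ to be any integer).

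The only mildly delicate point is bookkeeping the powers of $p$ and $q$: the stem contributes a denominator $p^x$, one cycle traversal contributes $p^y$ in the denominator, and the geometric series $\sum_{k\geq 0}(q/p)^{ky} = p^y/(p^y - q^y)$ introduces the factor $p^y - q^y$ while cancelling one $p^y$; combined with the overall $(q/p)^x$ weighting of the tail, the $q^x$ in the numerator and the extra $p^x$ in the denominator reorganize so that the common denominator is exactly $p^x(p^y - q^y)$ — note $\gcd(p,q)=1$ is what guarantees no further cancellation is forced but also none is obstructed, so this denominator suffices. I do not anticipate a real obstacle here; it is a routine but careful manipulation of geometric series over $\Rat$, and the main care is simply to track that no stray factor of $q$ ends up in a denominator (it cannot, since we only ever divide by $p$ and by $p^y - q^y$).
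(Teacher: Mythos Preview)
Your proposal is correct and follows essentially the same computation as the paper: split the run value into the stem $v_f=a/p^x$ and one cycle $W=c/p^y$, sum the geometric series for the periodic tail to obtain the factor $p^y/(p^y-q^y)$, and combine to get $\A(r)=\dfrac{a(p^y-q^y)+c\,q^x}{p^x(p^y-q^y)}$. The only superfluous part is the discussion of $\gcd(p,q)=1$ and the sign of $b$; the lemma only asks for \emph{some} integer $b$, so neither coprimality nor positivity plays any role.
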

\iftoggle{completeProofs}{
\begin{proof}
Let $\lambda=\frac{p}{q}$ be $\A$'s discount factor, and $\gamma$ its weight function. Consider a lasso run $r=t_0, t_1, \ldots, t_{x{-}1}, (t_{x}, t_{x{+}1}, \ldots, t_{x+y-1})^\omega$ of $\A$.
Let $v_f=\gamma(t_0) + \frac{1}{\lambda} \gamma(t_1) + \ldots + \frac{1}{\lambda^{x{-}1}}\gamma(t_{x{-}1})$ be its prefix value, and $v_\ell=\gamma(t_x) + \frac{1}{\lambda} \gamma(t_{x{+}1}) + \ldots + \frac{1}{\lambda^{y{-}1}}\gamma(t_{x+y-1}) $ its loop value.

Since all the weights are integers, we have that $v_f=\frac{a_f}{p^{x}}$ and $v_\ell=\frac{a_\ell}{p^{y}}$ for some integers $a_f$ and $a_\ell$. 
Recall that for a loop $\ell$ of length $y$ and accumulated value $v_\ell$ in a $\lambda$-NDA, the accumulated value of its infinite repetition is $\sum_{i=0}^{\infty} \frac{v_\ell}{(\lambda^y)^ i} = v_\ell \frac{ \lambda^y}{\lambda^y-1}$.
Hence the value of $r$ is 
\begin{align*}
	\gamma(r) &= v_f + \frac{1}{\lambda^x}\cdot v_\ell\frac{\lambda^y}{\lambda^y-1}=
	\frac{a_f}{p^{x}} + \frac{a_\ell}{p^{y}}\cdot \frac{1}{\lambda^{x-y}(\lambda^y-1)}= \frac{a_f}{p^x} + \frac{a_\ell\cdot q^{x-y}}{p^{y+x-y}(\frac{p^y-q^y}{q^y})}\\
	&=
	\frac{a_f(p^y-q^y)+a_\ell\cdot q^x}{p^x(p^y-q^y)}
\end{align*}
\end{proof}
}

Proceeding to the complexity analysis, let the input size be $S=|\A|+|\D|$, 
the reduced forms of $\lambda_\A$ and $\lambda_\D$ be $\frac{{p}}{{q}}$ and $\frac{{p_\D}}{{q_\D}}$ respectively, the number of states in $\A$ be $n$, and the maximal difference between transition weights in $\D$ be $M$.
Observe that  
$n\leq S, p\leq S, M\leq 2\cdot 2^S$, $\frac{\lambda_\D}{\lambda_D - 1} \leq \frac{{p_\D}}{{p_\D}-{q_\D}} \leq {p_\D} \leq S$, and for $\lambda_{\D} > \lambda_{\A} > 1$, we also have $\frac{\lambda_\D}{\lambda_\A}=\frac{p\cdot q_\D}{q\cdot p_\D} \geq 1+\frac{1}{S^2}$.

Observe that $\A$ has a best infinite run (and $\D$ has a worst infinite run), in a lasso form as in \cref{lem:lasso_run}, with $x,y\in[1..n]$. 
Indeed, following preferred transitions, a run must complete a lasso, and then may forever repeat its choices of preferred transitions.
Hence, $m_\A$, being the difference between two lasso runs, is in the form of 
\begin{align*}
m_\A &= 
\frac{b_1}{{p}^{x_1}({p}^{y_1}-{q}^{y_1})} - \frac{b_2}{{p}^{x_2}({p}^{y_2}-{q}^{y_2})}
=
\frac{b_3}{{p}^{n}({p}^{y_1}-{q}^{y_1})({p}^{y_2}-{q}^{y_2})} > \frac{b_3}{{p}^{n}{p}^{y_1}{p}^{y_2}} \\&
\geq
\frac{1}{{p}^{3n}} 
\geq 
\frac{1}{{S}^{3S}} 
\stackrel{\text{for } S \geq 1}{>}
\frac{1}{({2^S})^{3S}}
=
\frac{1}{2^{3S^2}}
\end{align*}
for some $x_1, x_2, y_1, y_2\leq n$ and some integers $b_1, b_2, b_3$.
(Similarly, we can show that $m_\D > \frac{1}{2^{3S^2}}$.)
We have
$\maxdiff{\D}\leq M\cdot\frac{\lambda_{\D}}{\lambda_{\D}-1}$, hence
\begin{equation*}
\frac{\maxdiff{\D}}{m_\A} 
\leq 
\frac{M\cdot\frac{\lambda_{\D}}{\lambda_{\D}-1}}{m_\A}
\leq 
\frac{2^{1+S}\cdot S}{m_\A}
\stackrel{(\text{for } S \geq 1)}{<}
\frac{2^{3S}}{m_\A}
< 
2^{3S+3S^2}
\end{equation*}

Recall that we unfold the computation tree until level $k$, which is the minimal integer such that $(\frac{\lambda_D}{\lambda_A})^k > \frac{\maxdiff{\D}}{m_\A}$.
Observe that for $S\geq 1$ we have
$
\big(\frac{\lambda_{\D}}{\lambda_{\A}}\big)^{S^2} \geq \big(1 + \frac{1}{S^2}\big)^{S^2}\geq 2
$,
hence for $k'=S^2\cdot(3S+3S^2)$, we have
$$\big(\frac{\lambda_D}{\lambda_A}\big)^{k'} =
\big((\frac{\lambda_D}{\lambda_A})^{S^2}\big)^{3S+3S^2}
\geq 
2^ {3S+3S^2} > \frac{\maxdiff{\D}}{m_\A}$$
meaning that $k$ is polynomial in $S$.
Similar analysis shows that $k$ is polynomial in $S$ also for $\lambda_{\D} < \lambda_{\A}$.

Considering decision problems that use our algorithm, due to the equivalence of NPSPACE and PSPACE, the algorithm can nondeterministically guess an optimal prefix word $u$ of size $k$, letter by letter, as well as a run $\psi$ of $\A$ on $u$, transition by transition, and then compute the value of $\A(\psi) + \frac{\lowest(\A^{\delta_\A(\psi)})}{{\lambda_A}^k} - \D(u) - \frac{\highest(\C^{(\delta_\A(\psi),\delta_\D(u))})}{{\lambda_D}^k}$.

Observe that along the run of the algorithm, we need to save the following information, which can be done in polynomial space:
\begin{itemize}
	\item 
	The automaton $\C\equiv \B \times \D$ (or $\A \times \B$), which requires polynomial space.
	\item 
	${\lambda_\A}^k$ (for $\A(\psi)$) and ${\lambda_\D}^k$ (for $\D(u)$).
	Since we save them in binary representation, we have $\log_2 (\lambda^k)\leq k \log_2(S)$, requiring polynomial space.
\end{itemize}

We thus get the following complexity result.
\begin{theorem}\label{cl:TwoNdasDecision}
	For input discount factors $\lambda_A,\lambda_D\in\Rat\cap (1,\infty)$, $\lambda_A$-NDA $\A$ and $\lambda_D$-DDA $\D$ on finite or infinite words, it is decidable in PSPACE whether  $\A(w) \geq \D(w)$ and whether $\A(w) > \D(w)$ for all words $w$. 
\end{theorem}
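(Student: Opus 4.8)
The plan is to derive both decision procedures as direct corollaries of the quantitative computations in \cref{cl:NDAsContainmentInfinite,cl:NDAsContainmentFinite} together with the polynomial bound on the unfolding level established in the complexity analysis above. The first step is to reformulate the comparisons in terms of the single quantity $v := \inf\{\A(w)-\D(w)\}$, an infimum over $\Sigma^\omega$ in the infinite-word case and over $\Sigma^+$ in the finite-word case. Then $\A(w)\ge\D(w)$ for all $w$ holds iff $v\ge 0$, and $\A(w)>\D(w)$ for all $w$ holds iff $v>0$, or $v=0$ and $v$ is not attained by any word. For infinite words $v$ is in fact a minimum attained by an actual word: in the proof of \cref{cl:NDAsContainmentInfinite} it is expressed as a finite minimum over computation-tree leaves of sums of values ($\A(\psi)$, $\lowest(\cdot)$, $\highest(\cdot)$, $\highest(\C^{(\cdot,\cdot)})$), each realized by an explicit run, so there the strict case collapses to $v>0$. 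For finite words the attainment test is genuinely needed, and it is exactly what \cref{cl:NDAsContainmentFinite} additionally provides, via the marking procedure detecting whether an optimal $\lowest$/$\highest$ value is realized by a run taking a $\fin$ transition.

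The second step is to check that all this fits in PSPACE, drawing on the analysis preceding the theorem. Applying \cref{lem:lasso_run} to the lasso-shaped optimal runs of $\A$ and $\D$ gives $m_\A, m_\D > 2^{-3S^2}$ with $S=|\A|+|\D|$; combined with $\maxdiff{\D}\le M\cdot\frac{\lambda_D}{\lambda_D-1}\le 2^{1+S}S$ and $(\lambda_D/\lambda_A)^{S^2}\ge 2$, this bounds the unfolding level $k$ of \eqref{eq:UnfoldingLevel} by a polynomial in $S$. Since NPSPACE $=$ PSPACE, the procedure guesses the optimal $k$-sized prefix $u$ and a run $\psi$ of $\A$ on it one symbol/transition at a time, maintains the polynomial-size product automaton $\C$ and the binary representations of $\lambda_A^k,\lambda_D^k$ (of $O(k\log S)$ bits), computes the needed $\lowest$/$\highest$ values in PTIME via one-player discounted-payoff games, and evaluates the leaf value $\A(\psi) + \lowest(\A^{\delta_\A(\psi)})/{\lambda_A}^k - \D(u) - \highest(\C^{(\delta_\A(\psi),\delta_\D(u))})/{\lambda_D}^k$ (and the symmetric expression when $\lambda_A>\lambda_D$, with $\C$ built on the preferred transitions of $\D$); for finite words it additionally runs the attainment-marking process. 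As $v$ is the minimum over such leaves, the sign of $v$ — hence $\A\ge\D$ — and, when needed, whether the value $0$ is attained — hence $\A>\D$ — are all decided in polynomial space, using that PSPACE is closed under complement for the universally quantified ``for all $w$'' statements.

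The third, routine step is to push the finite-word case through the reduction of \cref{cl:NDAsContainmentFinite}: replace $\A,\D$ by $\hat\A,\hat\D$ over $\Sigma\cup\{\fin\}$, use $\inf\{\A(u)-\D(u) \ST u\in\Sigma^+\} = \min\{\hat\A(w)-\hat\D(w) \ST w\in\Sigma^\omega\}$ together with the fact that attainment by a finite word corresponds to realizability by an infinite word taking a $\fin$ transition, and then invoke the infinite-word machinery. This adds only one state to each automaton, so the polynomial space bound is preserved.

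I expect the only real obstacle to be careful bookkeeping rather than new mathematics: the substantive content — correctness of the $k$-level unfolding (\cref{cl:NDAsContainmentInfiniteCorrectness}), the polynomial bound on $k$, and the attainment analysis — is already established. What remains to get right is (i) the precise case analysis linking $\A\ge\D$ and $\A>\D$ to the sign of $v$ and to attainment, being careful that strict comparison needs no attainment test on infinite words but does on finite words, and (ii) confirming that complementing the NPSPACE search for a ``bad'' word keeps the universally quantified statements in PSPACE.
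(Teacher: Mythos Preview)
Your proposal is correct and follows essentially the same approach as the paper: invoke \cref{cl:NDAsContainmentInfinite,cl:NDAsContainmentFinite} to compute $v=\inf\{\A(w)-\D(w)\}$, decide $\A\ge\D$ via $v\ge 0$ and $\A>\D$ via $v>0$ or ($v=0$ and not attained), and rely on the preceding complexity analysis (polynomial $k$, NPSPACE $=$ PSPACE, polynomial-space bookkeeping) for the PSPACE bound. The paper's own proof is just the two-line invocation of the lemmas that you have spelled out in more detail.
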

\begin{proof}
	We use \cref{cl:NDAsContainmentInfinite} in the case of infinite words and \cref{cl:NDAsContainmentFinite} in the case of finite words, checking whether $\min\set{\A(w)-\D(w)} < 0$ and whether $\min\set{\A(w)-\D(w)} \leq 0$. In the case of finite words, we also use the information of whether there is an actual word that gets the desired value.
\end{proof}

Since integral NDAs can always be determinized \cite{BH14}, we get as a corollary that there is an algorithm to decide equivalence and strict and non-strict containment of integral NDAs with different (or the same) discount factors. Note, however, that it might not be in PSPACE, since determinization exponentially increases the number of states, resulting in $k$ that is exponential in $S$, and storing in binary representation values in the order of $\lambda^k$ might require exponential space. 
\begin{corollary}
	There are algorithms to decide for input integral discount factors $\lambda_A,\lambda_B\in\Nat$, $\lambda_A$-NDA $\A$ and $\lambda_B$-NDA $\B$ on finite or infinite words whether or not $\A(w) > \B(w)$, $\A(w) \geq \B(w)$, or $\A(w) = \B(w)$ for all words $w$.
\end{corollary}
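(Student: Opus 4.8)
The plan is to reduce the comparison of two integral NDAs to the comparison of a nondeterministic NDA against a deterministic one, which is precisely the setting resolved by \cref{cl:TwoNdasDecision}. Since integral NDAs are closed under determinization \cite{BH14}, I would first compute from the $\lambda_B$-NDA $\B$ an equivalent $\lambda_B$-DDA $\D_\B$ (so that $\D_\B(w)=\B(w)$ for every word $w$) and, symmetrically, from the $\lambda_A$-NDA $\A$ an equivalent $\lambda_A$-DDA $\D_\A$. Both constructions stay within the integral-discount-factor world, so $\D_\B$ and $\D_\A$ are legitimate inputs to \cref{cl:TwoNdasDecision} together with a nondeterministic NDA of a possibly different integral discount factor.

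To decide whether $\A(w) \geq \B(w)$ for all $w$ and whether $\A(w) > \B(w)$ for all $w$, I would invoke \cref{cl:TwoNdasDecision} with the $\lambda_A$-NDA $\A$ in the nondeterministic role and the $\lambda_B$-DDA $\D_\B$ in the deterministic role; since $\D_\B \equiv \B$, the two answers returned are exactly the sought properties. For equivalence I would additionally decide whether $\A(w) \leq \B(w)$ for all $w$, which is the same as deciding whether $\B(w) \geq \D_\A(w)$ for all $w$, by invoking \cref{cl:TwoNdasDecision} with $\B$ now in the nondeterministic role and $\D_\A$ in the deterministic role; then $\A \equiv \B$ holds iff both $\A \geq \B$ and $\A \leq \B$ hold. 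The same argument applies verbatim to finite and to infinite words, since \cref{cl:TwoNdasDecision} already covers both.

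The point to be careful about is complexity rather than correctness: \cref{cl:TwoNdasDecision} runs in PSPACE in the size of the NDAs handed to it, but determinization of an integral NDA can increase the number of states exponentially, which makes the unfolding level $k$ of the algorithm behind \cref{cl:NDAsContainmentInfinite,cl:NDAsContainmentFinite} exponential in the original input size, and storing quantities of the order of $\lambda^k$ in binary can then require exponential space. Accordingly I would only claim decidability, matching the statement, and note explicitly that this route does not yield a PSPACE bound. Beyond that bookkeeping remark there is no real obstacle, as the reduction is a direct composition of the determinization result of \cite{BH14} with \cref{cl:TwoNdasDecision}.
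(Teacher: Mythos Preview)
Your proposal is correct and mirrors the paper's argument: determinize the integral NDA that must play the deterministic role (using \cite{BH14}) and then invoke \cref{cl:TwoNdasDecision}, handling equivalence via both containment directions. You also reproduce the paper's caveat that determinization may blow up the state space so that only decidability, not a PSPACE bound, is claimed.
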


\section{Conclusions}
The new decidability result, providing an algorithm for comparing discounted-sum automata with different integral discount factors, may allow to extend the usage of discounted-sum automata in formal verification, while the undecidability result strengthen the justification of restricting discounted-sum automata with multiple integral discount factors to tidy NMDAs.
The new algorithm also extends the possible, more limited, usage of discounted-sum automata with rational discount factors, while further research should be put into this direction.

\subsubsection{Acknowledgements} 
We thank Guillermo A. Perez for stimulating discussions on the comparison of integral NDAs with different discount factors.

%
%
%
\bibliographystyle{splncs04}
\bibliography{bibl}

\end{document}